\documentclass[preprint,1p,times,12pt]{elsarticle} % suggested by Konstantin
\usepackage{lineno}
\usepackage{geometry}
\usepackage{amssymb}
\usepackage{amsmath}
\usepackage{amsfonts}
\usepackage{algorithm}
\usepackage{array}
\usepackage[caption=false,font=normalsize,labelfont=sf,textfont=sf]{subfig}
\usepackage{textcomp}
\usepackage{stfloats}
\usepackage{url}
\usepackage{verbatim}
\usepackage{graphicx}
\usepackage{multirow}
\usepackage{lipsum}
\usepackage{mathtools}
\usepackage{amsthm}
\usepackage{algpseudocode}
\usepackage{arydshln}

\usepackage{booktabs}

\usepackage{soul}
\usepackage{color}
\usepackage{xcolor}

\algrenewcommand\algorithmicrequire{\textbf{Input:}}
\algrenewcommand\algorithmicensure{\textbf{Output:}}

\journal{Signal processing}

\begin{document}

\begin{frontmatter}

%% Title, authors and addresses

%% use the tnoteref command within \title for footnotes;
%% use the tnotetext command for theassociated footnote;
%% use the fnref command within \author or \affiliation for footnotes;
%% use the fntext command for theassociated footnote;
%% use the corref command within \author for corresponding author footnotes;
%% use the cortext command for theassociated footnote;
%% use the ead command for the email address,
%% and the form \ead[url] for the home page:
%% \title{Title\tnoteref{label1}}
%% \tnotetext[label1]{}
%% \author{Name\corref{cor1}\fnref{label2}}
%% \ead{email address}
%% \ead[url]{home page}
%% \fntext[label2]{}
%% \cortext[cor1]{}
%% \affiliation{organization={},
%%             addressline={},
%%             city={},
%%             postcode={},
%%             state={},
%%             country={}}
%% \fntext[label3]{}

\newtheorem{definition}{Definition}[section]
\newtheorem{theorem}[definition]{Theorem}
\newtheorem{proposition}[definition]{Proposition}
\newtheorem{lemma}[definition]{Lemma}
\newtheorem{corollary}[definition]{Corollary}
\newtheorem{remark}[definition]{Remark}
\newtheorem{assumption}[definition]{Assumption}
\newtheorem{conjecture}[definition]{Conjecture}
\newtheorem{example}[definition]{Example}
\newtheorem{property}[definition]{Property}

%\title{Practical polynomial decoupling framework with \textcolor{green!80!black}{Bernstein-Vandermonde regression}}
\title{Tensor-based Multi-layer Decoupling}

%% use optional labels to link authors explicitly to addresses:
 \author[label1]{Joppe De Jonghe}
  \ead{joppe.dejonghe@kuleuven.be}
 \author[label2]{Konstantin Usevich}
     \ead{konstantin.usevich@univ-lorraine.fr}
 \author[label3]{Philippe Dreesen}
     \ead{philippe.dreesen@gmail.com}
 \author[label1]{Mariya Ishteva}
    \ead{mariya.ishteva@kuleuven.be}

 \affiliation[label1]{organization={Dept. of Computer Science, NUMA, KU Leuven},%Department and Organization 
            city={Geel},
            postcode={2440},
            country={Belgium}}

 \affiliation[label2]{organization={CRAN, Universit{\'e} de Lorraine, CNRS},%Department and Organization 
            city={Vandoeuvre-l{\`e}s-Nancy},
            postcode={54500},
            country={France}}

\affiliation[label3]{organization={Dept. Advanced Computing Sciences (DACS), Mathematics Centre Maastricht (MCM), Maastricht University},%Department nd Organization 
            postcode={6229 EN},
            city={Maastricht},
            country={Netherlands}}

%% Abstract
\begin{abstract}

The decoupling of multivariate functions is a powerful modeling paradigm for learning multivariate input-output relations from data. For the single-layer case, established CPD-based methods are available, but the multi-layer case remained largely unexplored. This work introduces a tensor-based framework for multi-layer decoupling, which is based on
ParaTuck-type tensor decompositions and constrained optimization. We provide theoretical justification behind the considered tensor decompositions and parameterizations. Furthermore, we formulate a structured coupled matrix–tensor factorization that incorporates both Jacobian
and function evaluations, together with a bilevel optimization approach for adaptively balancing first- and zeroth-order information. The feasibility of the proposed methodology is illustrated on synthetic systems, a nonlinear system identification benchmark and neural network compression.

\end{abstract}

%%Graphical abstract
%\begin{graphicalabstract}
%\includegraphics{grabs}
%\end{graphicalabstract}

%%Research highlights
%\begin{highlights}
%\item Research highlight 1
%\item Research highlight 2
%\end{highlights}

%% Keywords
\begin{keyword}
%% keywords here, in the form: keyword \sep keyword
Tensor decomposition \sep Decoupling \sep System Identification \sep Neural Networks \sep ParaTuck
%% PACS codes here, in the form: \PACS code \sep code

%% MSC codes here, in the form: \MSC code \sep code
%% or \MSC[2008] code \sep code (2000 is the default)

\end{keyword}

\end{frontmatter}

% Different sections

\section{Introduction}

Decoupling multivariate nonlinear functions into compositions of linear transformations and univariate nonlinearities has emerged as a powerful modeling paradigm for learning multivariate nonlinear input-output relations.
The standard single‑layer decoupling form expresses a multivariate map as a linear transformation, followed by a set of univariate nonlinear functions and a final linear transformation~(Figure \ref{fig:single_layer_decoupling}). 
This representation has been successfully applied across a broad range of block-structured modeling problems \cite{dreesen2015decoupling_wiener, hollander2017multivariate, karami2021applying} as well as in neural network compression tasks \cite{zniyed2021tensor}, and is related to the Waring problem for polynomials \cite{iarrobino1999power, landsberg2011tensors} in algebraic geometry.

This work focuses on tensor-based decoupling methods, first introduced by Dreesen et al.~\cite{dreesen2015decoupling}, who showed that stacking Jacobian evaluations of the function under consideration into a third-order tensor admits a canonical polyadic decomposition (CPD). 
This CPD then allows to retrieve the parameters (weights and internal functions) of the decoupled form. 
%This original method relied on the uniqueness of the computed CPD, which can not be guaranteed in noisy or approximate cases that occur in practice. 
%To this end, 
Several works extended the original method with (non-)parametric smoothness constraints for the internal functions \cite{hollander2017multivariate, zniyed2021tensor, decuyper2022decoupling, decuyper2019decoupling} as well as the incorporation of zeroth-order and second-order information to create a coupled matrix-tensor factorization (CMTF) problem \cite{zniyed2021tensor,dreesen2018lvaica}. 
These extended methods are, however, limited to the single-layer case and multi-layer scenarios remain largely unexplored. Yet, multi-layer extensions of the decoupled model can allow to better maintain expressivity under tight parameter budgets compared to a single-layer representation.

%Despite the decade-long use of single‑layer decoupling methods in literature, their extension to deeper, multi-layer structures remains largely unexplored. On the other hand, real‑world nonlinear systems and modern deep neural networks may possess layered interactions between inputs that are better represented by multi-layer decouplings (or a multi-layer representation may simply be more parameter-efficient).

Extending decoupling techniques to multiple layers introduces several challenges. 
First, in the multi-layer setting the Jacobian tensor no longer admits a simple CPD but instead follows a ParaTuck‑$L$ structure. 
Second, while not occurring in the single-layer case, the multi-layer setting introduces additional ambiguities that couple the internal functions across layers, complicating the estimation of the nonlinearities of each layer from a unique ParaTuck-$L$ decomposition alone. Finally, ParaTuck-based estimation algorithms are known to be challenging without incorporating the structure constraints.

%Finally, any theoretical justification is lacking regarding the choice of basis functions, the treatment of bias terms across layers, or the design of optimization algorithms capable of computing multi-layer decoupled representations.

This paper addresses these gaps by introducing a general framework for tensor-based multi-layer decoupling. An earlier conference publication \cite{de2023compressing} focused on an initial algorithm for computing two‑layer decoupled representations and demonstrated it only on the compression of a simple MNIST network. This work significantly extends the conference version by formalizing the notion of an $L$‑layer decoupled representation and showing that the Jacobian tensor of such a representation necessarily admits a ParaTuck‑$L$ decomposition. 
This theoretical connection provides the foundation for recovering the weight matrices and internal functions of the multi-layer decoupling using tensor‑based methods. We also provide theoretical justification regarding the choice of basis functions, treatment of bias terms across layers, and principles of designing optimization algorithms for computing multi-layer decoupled representations.
%To make this recovery feasible, we propose a basis‑function parameterization for the internal nonlinearities that resolves both classical and slice‑wise scaling ambiguities. Building on this parameterization, we derive two practical algorithms for tensor-based multi-layer decoupling: a projected coupled matrix–tensor factorization method (PROJ-CMPT‑$L$) and a coefficient‑based variant (CONSTR‑CMPT-$L$). These algorithms can be seen as generalizations of existing single‑layer approaches and incorporate zeroth‑ and first‑order information from the system under consideration.
We develop two algorithms for computing multi-layer decoupled representations together with an adaptive bilevel strategy for tuning the coupling hyperparameter between Jacobian and function evaluations, which proves essential for achieving high accuracy in practical applications such as neural network compression.

%We validate the proposed framework on a synthetic example, a standard benchmark in nonlinear system identification, and two neural‑network compression case studies (MNIST and FashionMNIST). The results demonstrate that multi-layer decoupling is both feasible and advantageous in cases that can leverage the added flexibility of multiple layers.

The remainder of the paper is structured as follows. 
Section 2 provides a background on the single-layer decoupling problem, Section 3 reviews relevant tensor decompositions, namely CPD and ParaTuck‑$L$. 
Section 4 formalizes the link between Jacobian tensors and multilayer decoupled representations and discusses basis‑function choices and analytic structure. 
Section 5 derives the constrained optimization formulation combining first‑ and zeroth‑order information. 
Section 6 presents the proposed algorithms. 
Section 7 provides numerical experiments on a synthetic example, a standard benchmark in nonlinear system identification and two neural-network compression case studies. Finally, Section 8 concludes the paper.

\subsection*{Notation}
\noindent Vectors, matrices and tensors are denoted with bold lowercase, bold capital and caligraphic letters respectively, e.g., the vector $\mathbf{a}$, matrix $\mathbf{A}$ and tensor $\mathcal{X}$. 
For a matrix $\mathbf{A} \in \mathbb{R}^{I \times J}$, its $i$th row is denoted as $\mathbf{A}_{i,:}$ and $j$th column as $\mathbf{A}_{:,j}$. For a third-order tensor $\mathcal{X} \in \mathbb{R}^{I \times J \times K}$, $\mathcal{X}_{i,:,:}$,  $\mathcal{X}_{:,j,:}$ and $\mathcal{X}_{:,:,k}$ denote its horizontal, lateral and frontal slices \cite{kolda2009tensor}.
The Kronecker delta function is denoted by $\delta_{i,j}$.
The symbols $(.)^\top$, $\otimes$, $\odot$ and $\circ$ represent the transpose operator and the Kronecker, Khatri-Rao and outer products, respectively \cite{kolda2009tensor}.  
The $n$-mode unfolding, $\text{unfold}_{n}(\mathcal{X})$, and vectorization operators, $\text{vec}(\mathcal{X})$ and $\text{vec}(\mathbf{A})$, are as defined by Kolda et al.~\cite{kolda2009tensor}. 
The operator $\lVert . \rVert$ denotes the tensor norm and is defined as the square root of the sum of the squares of its elements. 
Thus, for a matrix $\mathbf{A} \in \mathbb{R}^{I \times J}$ and tensor $\mathcal{X} \in \mathbb{R}^{I \times J \times K}$,
$\lVert\mathbf{A} \rVert = \sqrt{\sum^{I}_{i=1} \sum^{J}_{j=1} a^2_{i,j}}$, 
and $\lVert \mathcal{X} \rVert = \sqrt{\sum^{I}_{i=1} \sum^{J}_{j=1} \sum^{K}_{k=1}  x^2_{i,j,k}}.
$

%\ku{Added text:}
%Note that the decoupling structure corresponds to a feedforward neural network with a single hidden layer with flexible activation functions.
%Note the absence of bias in the activation functions (compared to traditional feed-forward networks), but we assume that the flexibility of the activation function can compensate for the absence of bias term (see Section~\ref{sec:property_analytic}).

%!TEX root = ../elsarticle-template-num.tex
\section{Background on single-layer decoupling}

%\subsection{Single-layer decoupling}
The goal of finding a single-layer decoupled representation, as introduced in  \cite{dreesen2015decoupling}, is to represent a multivariate vector function $\mathbf{f} : \mathbb{R}^m \rightarrow \mathbb{R}^n$ as the product of the input $\mathbf{x} \in \mathbb{R}^m$ and a linear transformation matrix $\mathbf{W}_0 \in \mathbb{R}^{r \times m}$, passed to a layer of univariate nonlinear functions (nonlinearities) $\mathbf{g} = \begin{bmatrix} g^{(1)} & g^{(2)} & \cdots & g^{(r)}\end{bmatrix}^\top$, $g^{(i)}: \mathbb{R} \rightarrow \mathbb{R}$, followed by a final linear transformation matrix $\mathbf{W}_1 \in \mathbb{R}^{n \times r}$ (see Figure \ref{fig:single_layer_decoupling}), so that
\begin{equation}
    \mathbf{f}(\mathbf{x}) = \mathbf{W}_1
    \mathbf{g}(\mathbf{W}_0 \mathbf{x}). \label{eq:single_layer_decoupling}
\end{equation}

\begin{figure}
    \centering
    \includegraphics[width=0.75\textwidth]{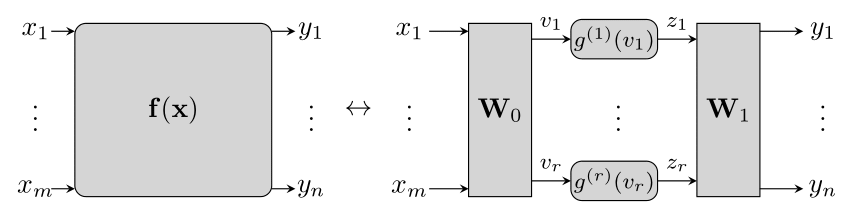}
    \caption{The single-layer decoupling problem. Given a multivariate vector function $\mathbf{f}: \mathbb{R}^{m} \rightarrow \mathbb{R}^{n}$, determine factor matrices $\mathbf{W}_0 \in \mathbb{R}^{r \times m}$, $\mathbf{W}_1 \in \mathbb{R}^{n \times r}$ and the internal functions $\mathbf{g} = \begin{bmatrix}g^{(1)} & g^{(2)} & \cdots & g^{(r)}\end{bmatrix}^\top$, $g^{(i)}: \mathbb{R} \rightarrow \mathbb{R}$, so that
    $\mathbf{f}(\mathbf{x}) = \mathbf{W}_1 \;
    \mathbf{g}(\mathbf{W}_0 \mathbf{x})$.
    }
    \label{fig:single_layer_decoupling}
\end{figure}

%\ku{Added text:}
%Note that the decoupling structure corresponds to a feedforward neural network with a single hidden layer with flexible activation functions.
%Note the absence of bias in the activation functions (compared to traditional feed-forward networks), but we assume that the flexibility of the activation function can compensate for the absence of bias term (see Section~\ref{sec:property_analytic}).

%\subsubsection{Overview of Jacobian-based methods}

To solve the single-layer decoupling problem, one of the key ideas is to use the first-order derivatives of the input-output map $\mathbf{f}(\mathbf{x})$, i.e., its Jacobian $\mathbf{J}_{\mathbf{f}}(\mathbf{x})$. 
Under the assumption of model \eqref{eq:single_layer_decoupling}, the Jacobian admits the factored form
\begin{equation}
    \mathbf{J}_{\mathbf{f}}(\mathbf{x}) = \mathbf{W}_1 \text{diag}(\mathbf{g}'(\mathbf{W}_0\mathbf{x})) \mathbf{W}_0. \nonumber
\end{equation}
The Jacobian matrices, evaluated in different points, can then be stacked into a tensor of which the canonical polyadic decomposition factors are linked to the linear layers in the exact case \cite{dreesen2015decoupling}, and the internal (activation) functions can also be estimated.
The original approach of~\cite{dreesen2015decoupling} proposes a two-step procedure, while the approaches developed in~\cite{hollander2017multivariate} focus on imposing polynomial constraints. 
The work of~\cite{zniyed2021tensor} proposed to combine zeroth- and first-order information, while second-order information is used in~\cite{dreesen2018decoupling}.
A regularization approach to promote smoothness is explored in~\cite{decuyper2019decoupling}, while a recent work in \cite{decuyper2022decoupling} proposes non-parametric filtering for estimation of the internal functions. 
%We should also note that 
The decoupling approach is related to the so-called active subspace technique in machine learning and approximation theory \cite{fornasier2021robust}.
It also has some similarities to score function approaches \cite{janzamin2015beating}, as well as derivative-based methods in independent component analysis \cite{comon2006blind}.
Note that these two approaches deal with single-output maps, and typically need higher-order derivatives (as in the work of~\cite{dreesen2018decoupling}).

\textit{Relation to block-structured and nonlinear system identification} Block structures that result from decoupling are used in nonlinear system identification, as they model complex dynamical systems and have physical interpretation \cite{giri2010block,schoukens2019nonlinear}. 
In this context, several works have applied decoupling methods to nonlinear system identification models such as parallel Wiener-Hammerstein \cite{dreesen2015decoupling_wiener, hollander2016parallel}, polynomial state-space models \cite{dreesen2016decoupling, decuyper2019decoupling}, Volterra series \cite{dreesen2021parameter} and polynomial NARX models \cite{karami2021applying, decuyper2021decoupling}. 
In these works, the mentioned models are used to represent the nonlinear behaviour of several benchmark problems such as the Silverbox \cite{wigren2013three} and the Bouc-Wen model \cite{noel2016hysteretic}. 
After learning the respective model, decoupling methods are applied to either a part of the model, e.g., a nonlinear static block of a Wiener-Hammerstein model, or the model as whole, e.g., for polynomial NARX models. These works highlight the usefulness of decoupling methods in the context of system identification since they provide not only model compression, but also interpretable internal functions that can be related to physical aspects of the underlying system. 

\textit{Relation to neural networks} Traditional neural network training and inference uses fixed activation functions such as ReLU, sigmoid or hyperbolic tangent \cite{apicella2021survey}.
However, decoupled representations can be viewed as neural networks with flexible activation functions that change during the learning of the decoupled network.

Neural networks with flexible activation functions are a timely and relevant topic. 
The parameter efficiency of such networks can be seen as a result of the increased expressivity of neurons with flexible activation functions, over fixed activation functions. 

Several existing works discuss neural networks with flexible activation functions and their approximation and parameter efficiency compared to standard, fixed activation networks. 
For example, Telgarsky~\cite{telgarsky2017neural} discusses neural networks with high-degree rational activation functions and derives error bounds for approximating ReLU networks, Boullé~\cite{boulle2020rational} expands on the work of Telgarsky~\cite{telgarsky2017neural} and Molina~\cite{molinapade}, discussing neural networks with low-degree rational activation functions and proving that these require exponentially fewer parameters than a ReLU network representing the same function. 
Several works discuss the theory and use of splines for learnable activations, some examples include~\cite{yang2024kolmogorov, unser2019representer, balestriero2018spline, parhi2021banach, bohra2020learning}; in particular, Bohra~\cite{bohra2020learning} shows empirically that their introduced deep spline network outperforms the ReLU counterpart and Yang~\cite{yang2024kolmogorov} introduced spline-based Kolmogorov-Arnold networks (KAN) together with promising results on problem cases from the Feynman symbolic regression dataset as well as initial results for MLPs with spline activations, improving image fitting results compared to SIREN~\cite{sitzmann2020implicit}.
This increased expressivity motivates the use of flexible activation functions in the context of decoupling and compression of neural networks.

%!TEX root = ../elsarticle-template-num.tex
\section{Background on tensor decompositions}

\subsection{Canonical polyadic decomposition}

\noindent The canonical polyadic decomposition (CPD) \cite{kolda2009tensor} of a third-order tensor $\mathcal{X} \in \mathbb{R}^{I \times J \times K}$ expresses the tensor as a sum of rank-one terms.
Here we use an equivalent definition~\cite{delathauwer2006linkCPDsimul}: tensor $\mathcal{X}$ admits a CPD if its slices can be written as 
\begin{equation}
    \mathcal{X}_{:,:,k} = \mathbf{A} \cdot \text{diag}(\mathbf{C}_{k,:}) \cdot \mathbf{B}^\top, \text{ for } k=1,2,\hdots,K, \label{eq:CPD}
\end{equation}
where $\mathbf{A} \in \mathbb{R}^{I \times r}$, $\mathbf{B} \in \mathbb{R}^{J \times r}$ and $\mathbf{C} \in \mathbb{R}^{K \times r}$ 
are the three factor matrices.
We use a shorthand notation  $\mathcal{X}  =[\![\mathbf{A}, \mathbf{B}, \mathbf{C}]\!]$.
The tensor rank is defined as the smalles value $r$ for which equation \eqref{eq:CPD} holds. The CPD admits the following scaling and permutation ambiguities \cite{kolda2009tensor}, that is, there exist the following equivalent decompositions of the same tensor::
%These scaling and permutation ambiguities can be described by
\begin{equation}
    \mathcal{X} = [\![\mathbf{A}\mathbf{\Pi}\mathbf{\Lambda_{A}}, \mathbf{B}\mathbf{\Pi}\mathbf{\Lambda_{B}}, \mathbf{C}\mathbf{\Pi}\mathbf{\Lambda_{C}}]\!], \label{eq:ambiguities_CPD}
\end{equation}
with permutation matrix $\mathbf{\Pi} \in \mathbb{R}^{r \times r}$ and diagonal matrices $\mathbf{\Lambda_{A}}$, $\mathbf{\Lambda_{B}}$, $\mathbf{\Lambda_{C}}$ for which $\mathbf{\Lambda_{A}}\mathbf{\Lambda_{B}}\mathbf{\Lambda_{C}}=\mathbf{I}$.
%Several uniqueness conditions exist (for example  Kruskal's condition), see  \cite{sidiropoulos2017tensor} for an overview.

%\subsubsection{Uniqueness of CPD}
%Compared to matrix decompositions, which are typically not unique, tensor decompositions such as the CPD or PT-$2$ have known conditions for uniqueness (up to scaling and permutation ambiguities mentioned in section \ref{sec:ambiguities}).

%For the CPD, the most well-known is Kruskal's condition \cite{kruskal1977three} which provides a sufficient condition for uniqueness and states that a rank-$r$ CPD of a third order tensor $\mathcal{X} = [\![\mathbf{A}, \mathbf{B}, \mathbf{C}]\!]$ is unique if
%\begin{equation}
%    k_{\mathbf{A}} + k_{\mathbf{B}} + k_{\mathbf{C}} \leq 2r + 2. \nonumber
%\end{equation}

\subsection{ParaTuck-$L$ decomposition}

The ParaTuck-$L$ (PT-$L$) decomposition\footnote{Also called ParaTuck-$Z$ \cite{de2019paratuck}, which is a generalization of the ParaTuck-$2$ decomposition, introduced in \cite{harshman1996uniqueness}.} (Figure \ref{fig:PT-k-of-J}) generalizes the CPD for third-order tensors. A tensor $\mathcal{X} \in \mathbb{R}^{I \times J \times K}$ is said to admit a PT-$L$ decomposition if it can be written as
\begin{align}
    \mathcal{X}_{:,:,k} &= \mathbf{W}_L \cdot \mathbf{D}^{(k)}_{L} \cdot \mathbf{W}_{L-1} \cdots \mathbf{D}^{(k)}_{1} \cdot \mathbf{W}_0, \label{eq:PT-k}
\end{align}
for $k=1,2,\hdots,K$, where $\mathbf{W}_L \in \mathbb{R}^{I \times r_L}$, $\mathbf{W}_0 \in \mathbb{R}^{r_1 \times J}$, $\mathbf{W}_{\ell} \in \mathbb{R}^{r_{\ell+1} \times r_{\ell}}$, for $\ell=1,2,\hdots,L-1$ and the $\mathbf{D}^{(k)}_{\ell} \in \mathbb{R}^{r_{\ell} \times r_{\ell}}$, for $\ell=1,2,\hdots,L$, are diagonal matrices. 
We use a shorthand notation
% This work denotes the full PT-$L$ decomposition of $\mathcal{X}$ as
\begin{align}
    \mathcal{X} &= \text{PT}_L(\mathbf{W}_L, \mathbf{W}_{L-1}, \hdots, \mathbf{W}_0,\mathbf{G}^{(L)}, \mathbf{G}^{(L-1)}, \hdots , \mathbf{G}^{(1)}). \nonumber %\\
    %&= \text{PT}_L(\mathbf{W}_{L...0}, \mathbf{G}^{(L...1)}). \nonumber
\end{align}
Here, the rows of the factor matrices $\mathbf{G}^{(\ell)} \in \mathbb{R}^{K \times r_{\ell}}$, for $\ell=1,2,\hdots,L$, correspond to the diagonals of the matrices $\mathbf{D}^{(s)}_{\ell}$, for $\ell=1,2,\hdots,L$, i.e.,
\begin{equation}
\mathbf{D}^{(s)}_{\ell} = \mathrm{diag}( \mathbf{G}^{(\ell)}_{s,:}).
\label{eq:DGcorrespondence}
\end{equation}
The numbers $r_{\ell}$, which are the number of columns of the factor matrices $\mathbf{G}^{(\ell)}$, are called the ParaTuck ranks of the decomposition, denoted as $(r_L, r_{L-1},\hdots,r_1)$.

\begin{figure*}
    \centering
    \includegraphics[width=0.9\textwidth]{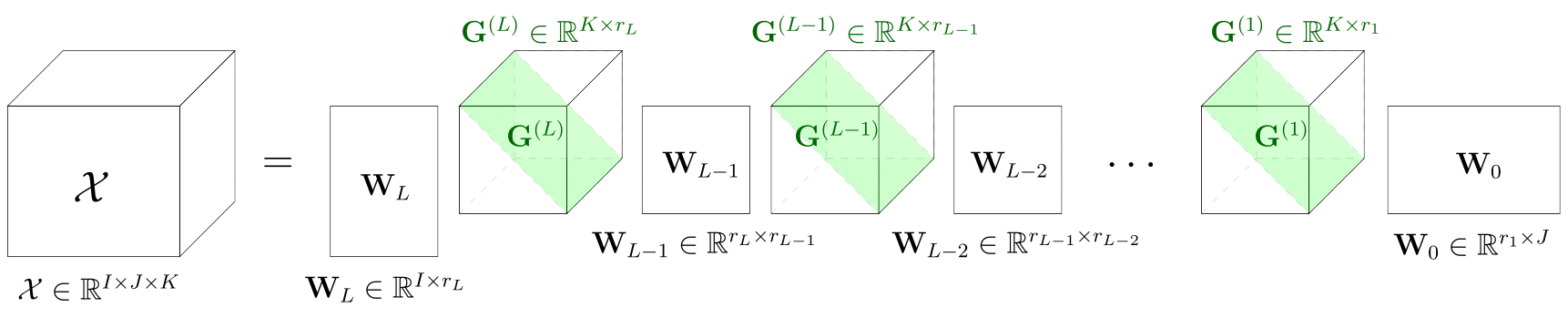}
    \caption{ParaTuck-$L$ decomposition of the tensor $\mathcal{X} \in \mathbb{R}^{I \times J \times K}$, together with the dimensions of the factor matrices $\mathbf{W}_0$, $\mathbf{W}_{\ell}$ and $\mathbf{G}^{(\ell)}$, for $\ell=1,2,\hdots,L$.}
    \label{fig:PT-k-of-J}
\end{figure*}

 For $L=1$, the ParaTuck-$L$ decomposition coincides with the CPD. 
 Thus, 
\begin{equation}
    \mathcal{X} = \text{PT}_1(\mathbf{W}_1, \mathbf{W}_0, \mathbf{G}^{(1)}) = [\![\mathbf{W}_1, \mathbf{W}^\top_0, \mathbf{G}^{(1)}]\!]. \nonumber
\end{equation}

\subsection{Ambiguities of the PT-$L$ decomposition}\label{sec:ambiguities}

 As with the CPD, the ParaTuck-$L$ decomposition has ambiguities because permutation and scaling of the columns/rows of matrices $\mathbf{W}_{\ell}$ result in an alternative decomposition. 
 The following gives a complete description of trivial ambiguities of ParaTuck-$L$: Consider the ParaTuck-$L$ decomposition of a third-order tensor $\mathcal{X} \in \mathbb{R}^{I \times J \times K}$:
\[
\mathcal{X} =  \text{PT}_L({\mathbf{W}}_L, {\mathbf{W}}_{L-1}, \hdots, \mathbf{W}_0,{\mathbf{G}}^{(L)}, \mathbf{G}^{(L-1)}, \hdots , \mathbf{G}^{(1)}).
\]
For $\ell=1,\ldots, L-1$, let $\mathbf{\Pi}_{\ell} \in \mathbb{R}^{r_\ell \times r_\ell}$ be  permutation matrices and  $\mathbf{\Lambda}^{(\ell)}_1,\mathbf{\Lambda}^{(\ell)}_2,\mathbf{\Lambda}^{(\ell)}_3 \in \mathbb{R}^{r_\ell \times r_\ell}$ be diagonal matrices satisfying $ \mathbf{\Lambda}^{(\ell)}_1  \mathbf{\Lambda}^{(\ell)}_2  \mathbf{\Lambda}^{(\ell)}_3 = \mathbf{I}_{r_{\ell}}$;
for convenience, set
\[
\mathbf{\Pi}^{(0)} = \mathbf{\Lambda}^{(0)}_1 = \mathbf{\Lambda}^{(0)}_2 = \mathbf{\Lambda}^{(0)}_3 = \mathbf{I}_{J}, \;
\mathbf{\Pi}^{(L+1)} = \mathbf{\Lambda}^{(L+1)}_1 = \mathbf{\Lambda}^{(L+1)}_2 = \mathbf{\Lambda}^{(L+1)}_3 = \mathbf{I}_{I}.
\]
Let also $\mathbf{\Gamma}^{(\ell)} \in \mathbb{R}^{K \times K}$, for $\ell = 1,\ldots, L-1$ be diagonal matrices satisfying
\[
 \mathbf{\Gamma}^{(1)} \cdots  \mathbf{\Gamma}^{(L-1)}= \mathbf{I}_K,
\]
Then for $\widehat{\mathbf{W}}_{\ell}$ and $\widehat{\mathbf{G}}_{\ell}$, given as
\begin{align*}
\widehat{\mathbf{W}}_{\ell}  =  \mathbf{\Pi}^{\top}_{\ell+1} \mathbf{\Lambda}^{(\ell+1)}_{1} \mathbf{W}_\ell \mathbf{\Lambda}^{(\ell)}_{2} \mathbf{\Pi}_{\ell},\quad
\widehat{\mathbf{G}}_{\ell}  =   \mathbf{\Gamma}^{(\ell)}\mathbf{G}_{\ell} \mathbf{\Lambda}^{(\ell)}_{3} \mathbf{\Pi}_{\ell},
\end{align*}
we have that $\mathcal{X}$ admits the following alternative PT-$L$ decomposition:
\[
\mathcal{X} =  \text{PT}_L(\widehat{\mathbf{W}}_L, \widehat{\mathbf{W}}_{L-1}, \hdots, \widehat{\mathbf{W}}_0,\widehat{\mathbf{G}}^{(L)}, \widehat{\mathbf{G}}^{(L-1)}, \hdots , \widehat{\mathbf{G}}^{(1)}).
\]
%\begin{example}
%Give a small example: for PT-2? CPD? just to explain the notation.
%\end{example}
%\begin{remark}
%In terms of diagonal matrices this gives: (maybe just for the small example).
%\end{remark}
\begin{remark}[On slice-wise ambiguities] The ambiguities involving $\mathbf{\Gamma}^{(\ell)}$ are termed slice-wise scaling ambiguities, as they correspond to simultaneous scaling of the same rows of matrices $\mathbf{G}^{(\ell)}$, that is the diagonal matrices $\mathbf{D}^{k}_{\ell}$ in the $k$-th frontal slice of $\mathcal{X}$.
The CPD does not exhibit slice-wise scaling ambiguities, as can be seen in equation \eqref{eq:ambiguities_CPD}, i.e., $\mathbf{\Gamma}^{(1)} = \mathbf{I}_{K}$. 
As will be discussed later, it is necessary to take these additional ambiguities into account when $L \geq 2$.
\end{remark}

\section{Multilayer decoupling and Jacobian structure}

This section provides a definition of the multi-layer decoupling problem as well as theoretical results on the relation between the ParaTuck-$L$ decomposition and stacked Jacobian matrices of the multi-layer model. In addition, a basis function representation of the internal functions is discussed, together with favorable theoretical properties for the chosen basis.

\subsection{Multi-layer decoupling: formal definition}

The multi-layer decoupling naturally extends the single-layer case as follows.

\begin{definition} \label{def:k-layer-decoupling}
    Let $\mathbf{f}: \mathbb{R}^m \rightarrow \mathbb{R}^n$ be a multivariate vector function. An $L$-layer decoupling of $\mathbf{f}$ is defined as
    \begin{equation}
        \mathbf{f}(\mathbf{x}) = \mathbf{W}_L\mathbf{g}_L(\mathbf{W}_{L-1}\mathbf{g}_{L-1}(\hdots \mathbf{W}_1\mathbf{g}_1(\mathbf{W}_0 \mathbf{x})\hdots)), \label{eq:k-layer-decoupling}
    \end{equation}
    where $\mathbf{W}_{0} \in \mathbb{R}^{r_1 \times m}$, $\mathbf{W}_L \in \mathbb{R}^{n \times r_L}$ and $\mathbf{W}_{\ell} \in \mathbb{R}^{r_{\ell + 1} \times r_{\ell}}$, for $\ell=1,2,\hdots,L-1$. The internal functions $\mathbf{g}_{\ell} : \mathbb{R}^{r_{\ell}} \rightarrow \mathbb{R}^{r_{\ell}}$ are defined as
    \begin{equation}
        \mathbf{g}_{\ell} : \mathbf{u} \mapsto \begin{bmatrix}
                    g^{(1)}_{\ell}(u_1) & g^{(2)}_{\ell}(u_2) & \cdots & g^{(r_{\ell})}_{\ell}(u_{r_{\ell}})\nonumber
                \end{bmatrix}^\top,
    \end{equation}
    where $\mathbf{u} = \begin{bmatrix}
        u_1 & u_2 & \hdots & u_{r_{\ell}}
    \end{bmatrix}^\top \in \mathbb{R}^{{r}_\ell}$ and $g^{(j)}_{\ell}: \mathbb{R} \rightarrow \mathbb{R}$, for $\ell=1,2,\hdots,L$, and  $j=1,2,\hdots,r_{\ell}$. 
    %\mi{this definition is difficult to read}
\end{definition}

\begin{figure*}
    \centering
    \includegraphics[width=0.98\textwidth]{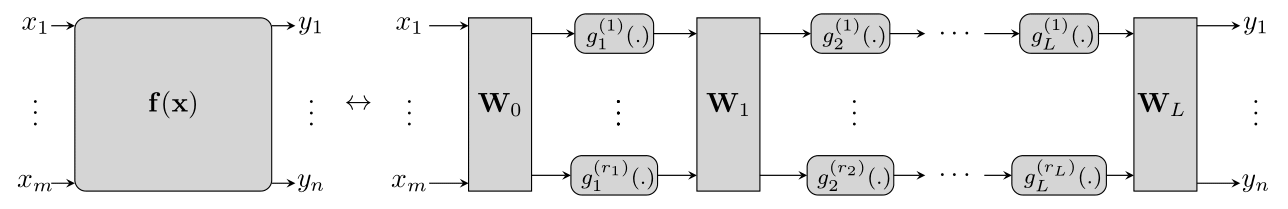}
    \caption{Multi-layer decoupling problem. Given a multivariate vector function $\mathbf{f}$, determine the factor matrices $\mathbf{W}_0$, $\mathbf{W}_1$,$\hdots$, $\mathbf{W}_L$ and the parameters of the internal functions $g^{(j)}_{\ell}$, for $\ell=1,2,\hdots,L$ and $j=1,2,\hdots,r_{\ell}$.}
    \label{fig:multi-layer-decoupling}
\end{figure*}

The $L$-layer decoupling problem (see Figure \ref{fig:multi-layer-decoupling}) is then defined as: given a multivariate vector function $\mathbf{f}: \mathbb{R}^m \rightarrow \mathbb{R}^n$, determine the matrices $\mathbf{W}_0$, $\mathbf{W}_{\ell}$ and internal functions $\mathbf{g}_{\ell}$, for $\ell=1,2,\hdots,L$, such that equation \eqref{eq:k-layer-decoupling} holds. This problem can be viewed in the exact (regardless of uniqueness) or approximate case. 
Here, `exact' means to have an exact representation of $\mathbf{f}(\mathbf{x})$ according to equation \eqref{eq:k-layer-decoupling}, and `approximate' means to have an $L$-layer decoupled representation that approximates the function $\mathbf{f}(\mathbf{x})$ based on some distance measure.

\subsection{Link to ParaTuck-$L$ decomposition}

The described methods use first-order information of the function $\mathbf{f}$, captured by its Jacobian $\mathbf{J_f}$
\begin{equation}
    \mathbf{J_f}(\mathbf{x}) = 
    \begin{bmatrix}
        \dfrac{\partial f_1}{\partial x_1}(\mathbf{x}) & \cdots & \dfrac{\partial f_1}{\partial x_m}(\mathbf{x}) \\
        \vdots & \ddots & \vdots \\
        \dfrac{\partial f_n}{\partial x_1}(\mathbf{x}) & \cdots & \dfrac{\partial f_n}{\partial x_m}(\mathbf{x})
    \end{bmatrix} \in \mathbb{R}^{n \times m}.
    \nonumber
\end{equation}
For an $L$-layer representation, the following Proposition holds.

\begin{proposition}\label{lemma:jac_multilayer}
Given a multivariate vector function $\mathbf{f}: \mathbb{R}^m \rightarrow \mathbb{R}^n$, consider the third order tensor $\mathcal{J} \in \mathbb{R}^{n \times m \times S}$  obtained by stacking the evaluation of $\mathbf{J_f}$ at points   $\mathbf{x}^{(1)}, \mathbf{x}^{(2)},\ldots, \mathbf{x}^{(S)} \in \mathbb{R}^{m}$ as frontal slices
\[
\mathcal{J}_{:,:,s} = \mathbf{J_f}(\mathbf{x}^{(s)}).
\]
If $\mathbf{f}$ admits an $L$-layer decoupling (see Definition \ref{def:k-layer-decoupling}),
then the tensor $\mathbf{\mathcal{J}}$ has the ParaTuck-L decomposition
\[
\mathbf{\mathcal{J}} =  \text{PT}_L(\mathbf{W}_L, \mathbf{W}_{L-1}, \hdots, \mathbf{W}_0,\mathbf{G}^{(L)}, \mathbf{G}^{(L-1)}, \hdots , \mathbf{G}^{(1)}),
\]
where the factor matrices $\mathbf{G}^{(\ell)} \in \mathbb{R}^{S \times r_{\ell}}$ are given as
\begin{equation}
    \mathbf{G}^{(\ell)} = \begin{bmatrix}
     (g^{(1)}_\ell)' (u^{(1)}_{\ell,1}) & \cdots & (g^{(r_\ell)}_\ell)' (u^{(1)}_{\ell,r_{\ell}}) \\
\vdots & & \vdots \\
     (g^{(1)}_\ell)' (u^{(S)}_{\ell,1}) & \cdots & (g^{(r_\ell)}_\ell)' (u^{(S)}_{\ell, r_{\ell}}) \\
    \end{bmatrix}, \label{eq:struct_Dg1}
\end{equation}
where $   (g^{(j)}_\ell)' $ denotes the derivative of $g^{(j)}_\ell$, and $\mathbf{u}^{(s)}_\ell \in \mathbb{R}^{r_{\ell}}$
\begin{align}
\mathbf{u}^{(s)}_\ell =  \begin{bmatrix}u^{(s)}_{\ell,1} & \ldots &  u^{(s)}_{\ell,r_\ell}  \end{bmatrix}^{\top} \label{eq:u_recursive}
\end{align}
are defined recursively as $\mathbf{u}^{(s)}_{1} = \mathbf{W}_0  \mathbf{x}^{(s)}$ and $\mathbf{u}^{(s)}_{\ell} = \mathbf{W}_{\ell-1} \mathbf{g}_{\ell-1}(\mathbf{u}^{(s)}_{\ell-1}), \quad \text{for }\ell = 2,\ldots,L.$
\end{proposition}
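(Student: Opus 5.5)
The plan is to apply the multivariate chain rule to the composition in \eqref{eq:k-layer-decoupling}, written as a chain of elementary maps, and then read off the ParaTuck-$L$ structure slice by slice. Write $\mathbf{f} = \mathbf{h}_{L} \circ \mathbf{g}_L \circ \mathbf{h}_{L-1} \circ \cdots \circ \mathbf{h}_1 \circ \mathbf{g}_1 \circ \mathbf{h}_0$, where $\mathbf{h}_\ell(\mathbf{v}) = \mathbf{W}_\ell \mathbf{v}$ are linear maps. The Jacobian of a linear map is constant: $\mathbf{J}_{\mathbf{h}_\ell} = \mathbf{W}_\ell$. Because each $\mathbf{g}_\ell$ acts componentwise, $g^{(j)}_\ell$ depending only on $u_j$, its Jacobian at $\mathbf{u}$ is diagonal:
\[
\mathbf{J}_{\mathbf{g}_\ell}(\mathbf{u}) = \mathrm{diag}\bigl((g^{(1)}_\ell)'(u_1), \ldots, (g^{(r_\ell)}_\ell)'(u_{r_\ell})\bigr).
\]
Throughout, I will assume the $g^{(j)}_\ell$ are differentiable, which is implicit in the statement since it refers to $\mathbf{J_f}$ and $(g^{(j)}_\ell)'$.

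Next I track the intermediate arguments. Set $\mathbf{u}_1(\mathbf{x}) = \mathbf{W}_0\mathbf{x}$ and $\mathbf{u}_\ell(\mathbf{x}) = \mathbf{W}_{\ell-1}\mathbf{g}_{\ell-1}(\mathbf{u}_{\ell-1}(\mathbf{x}))$ for $\ell = 2,\ldots,L$. Then $\mathbf{u}_\ell(\mathbf{x})$ is exactly the input to the $\ell$-th nonlinear layer. The chain rule applied to the full composition, evaluated at $\mathbf{x}$, gives
\[
\mathbf{J_f}(\mathbf{x}) = \mathbf{W}_L \,\mathbf{J}_{\mathbf{g}_L}(\mathbf{u}_L(\mathbf{x}))\, \mathbf{W}_{L-1} \cdots \mathbf{J}_{\mathbf{g}_1}(\mathbf{u}_1(\mathbf{x}))\, \mathbf{W}_0 .
\]
I will make this rigorous by induction on $L$. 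The case $L=1$ is the single-layer formula from Section 2. For the inductive step, $\mathbf{f}$ is an $(L-1)$-layer decoupling with outer matrix $\mathbf{W}_{L-1}$, followed by $\mathbf{W}_L\mathbf{g}_L(\cdot)$, and one further application of the chain rule appends the factors $\mathbf{W}_L\,\mathbf{J}_{\mathbf{g}_L}(\mathbf{u}_L(\mathbf{x}))$ on the left.

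Finally I evaluate at $\mathbf{x} = \mathbf{x}^{(s)}$. By construction $\mathbf{u}_\ell(\mathbf{x}^{(s)}) = \mathbf{u}^{(s)}_\ell$, matching the recursion \eqref{eq:u_recursive}. So the $s$-th frontal slice is $\mathcal{J}_{:,:,s} = \mathbf{W}_L \mathbf{D}^{(s)}_L \mathbf{W}_{L-1} \cdots \mathbf{D}^{(s)}_1 \mathbf{W}_0$, where $\mathbf{D}^{(s)}_\ell = \mathrm{diag}(\mathbf{G}^{(\ell)}_{s,:})$ and $\mathbf{G}^{(\ell)}$ is as in \eqref{eq:struct_Dg1}. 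This is precisely \eqref{eq:PT-k} combined with the correspondence \eqref{eq:DGcorrespondence}. It then remains to check that the dimensions agree with the PT-$L$ definition, with $I=n$, $J=m$, $K=S$.

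There is no real obstacle here. The only point needing care is bookkeeping: the argument of each diagonal factor must be the layer input $\mathbf{u}^{(s)}_\ell$, not $\mathbf{x}^{(s)}$, and the factors must appear in the correct left-to-right order. The induction step handles both.
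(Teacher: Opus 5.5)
Your proposal is correct and takes the same approach as the paper: both apply the chain rule to the composition, obtaining each frontal slice as $\mathbf{W}_L \mathbf{D}^{(s)}_L \mathbf{W}_{L-1}\cdots\mathbf{D}^{(s)}_1\mathbf{W}_0$ with diagonal $\mathbf{D}^{(s)}_\ell$ evaluated at the layer inputs $\mathbf{u}^{(s)}_\ell$, and then read off the ParaTuck-$L$ factors through the correspondence \eqref{eq:DGcorrespondence}. Your induction on $L$ and your explicit differentiability assumption make rigorous what the paper states in a single line.
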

\begin{proof}
From the chain rule applied to \eqref{eq:k-layer-decoupling}, we get that 
 \begin{equation}
  \mathcal{J}_{:,:,s}  =   \mathbf{J_f}(\mathbf{x}^{(s)}) = \mathbf{W}_L \cdot \mathbf{D}^{(s)}_{L} \cdot \mathbf{W}_{L-1} \cdots \mathbf{D}^{(s)}_{1} \cdot \mathbf{W}_0, \label{eq:struct_frontal_slices_J}
 \end{equation}
where each diagonal matrix $\mathbf{D}^{(s)}_{\ell} \in \mathbb{R}^{r_{\ell} \times r_{\ell}}$ is given by
\[
  \mathbf{D}^{(s)}_{\ell} =
  \begin{bmatrix}
   (g^{(1)}_\ell)' (u^{(s)}_{\ell,1}) & & \\  & \ddots& \\& & (g^{(r_\ell)}_\ell)' (u^{(s)}_{\ell,r_{\ell}})
   \end{bmatrix}.
\]
Therefore, by definition, the tensor $\mathcal{J}$ admits the ParaTuck-$L$ decomposition \eqref{eq:PT-k},
the factors $\mathbf{G}^{(\ell)}$ of which are given through the correspondence \eqref{eq:DGcorrespondence}.
\end{proof}

\subsection{Parameterizing the internal functions} \label{sec:param_functions}

For designing optimization algorithms, we assume that in each layer the internal functions $\mathbf{g}_{\ell}$ are linear combinations of the basis functions  $\{\phi_{\ell,1},\;\phi_{\ell,2},\hdots,\;\phi_{\ell,d_{\ell}}\}$. 
That is, every internal function can be expressed as
\begin{equation}\label{eq:activation_bases}
    g^{(j)}_{\ell}(u) = c^{(j)}_{\ell,0} + \sum^{d_{\ell}}_{i=1} c^{(j)}_{\ell,i} \cdot \phi_{\ell,i}(u),
\end{equation}
where $d_{\ell}$ is the degree of $g^{(j)}_{\ell}$ and the coefficients $c^{(j)}_{\ell,0}, c^{(j)}_{\ell,1},\ldots,c^{(j)}_{\ell,d_{\ell}}$ are to be learned as part of the decoupling problem \cite{zniyed2021tensor}.
(The proposed algorithms can be generalized to the case of different bases for each internal function).

\subsection{Properties of basis functions: analytic, polynomial and scale-invariance}

\label{sec:property_analytic}
\noindent As shown by the following theorem and corollary, if the internal functions of an $L$-layer decoupled representation of a function $\mathbf{f}(\mathbf{x})$ are polynomial, the bias terms in all but the last layer can be removed. This is a useful property for our algorithms, as will become clear in Section~\ref{sec:algorithm}.
\begin{theorem}\label{theo:bias_theorem}
    Given a multivariate vector function $\mathbf{f}: \Omega \subseteq \mathbb{R}^{m} \rightarrow \mathbb{R}^{n}$ with an exact $L$-layer decoupled representation as in \eqref{eq:k-layer-decoupling}.
    If, for any input $\mathbf{x} \in \Omega$, the transformed input $\mathbf{u}_{\ell} = \mathbf{W}_{\ell}\mathbf{g}_{\ell}(\hdots\mathbf{g}_1(\mathbf{W}_0 \mathbf{x})\hdots)$ for each layer $\ell$ of the decoupled representation in equation \eqref{eq:k-layer-decoupling} belongs to a layer-specific open set $\Omega_{\ell} \subseteq \mathbb{R}^{r_{\ell}}$ such that each internal function $g^{(j)}_{\ell}$, for $j=1,2,\hdots,r_{\ell}$, is real analytic on $\Omega_{\ell}$, then $\mathbf{f}(\mathbf{x})$ admits an equivalent exact decoupled representation
    \begin{equation}
        \mathbf{f}(\mathbf{x}) = \mathbf{W}_L \; \bar{\mathbf{g}}_L(\mathbf{W}_{L-1} \; \hat{\mathbf{g}}_{L-1}(\hdots \hat{\mathbf{g}}_1(\mathbf{W}_0 \mathbf{x})\hdots)), \label{eq:decoupled_analytic_no_constant}
    \end{equation}
    where the constant terms $c^{(j)}_{\ell,0}$ of the internal functions $\hat{g}^{(j)}_{\ell}$ of all but the last layer $\bar{\mathbf{g}}_{L}$, are equal to $0$. Furthermore, all the internal functions are convergent power series.
\end{theorem}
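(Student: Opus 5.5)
Absorb constants layer by layer, moving each layer's constant term forward into the next layer's input argument, where it becomes a shift of the next internal functions. A shift of a real analytic function is again real analytic on the correspondingly shifted open set. The final layer keeps its constants, since there is no subsequent layer to absorb them.

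First, for each $\ell\le L-1$ and $j$, I write $g^{(j)}_\ell(u)=c^{(j)}_{\ell,0}+\hat g^{(j)}_\ell(u)$, with $c^{(j)}_{\ell,0}:=g^{(j)}_\ell(0)$, so that $\hat g^{(j)}_\ell(0)=0$. If $0\notin\Omega_\ell$, or more generally to make sense of ``constant term'', I work with the power series expansion. Here the natural reading is that each $g^{(j)}_\ell$ is locally a convergent power series (real analyticity). I take the constant term to be that of the expansion about the chosen center; after a translation of coordinates this center can be taken as the origin of the (shifted) domain. Then $\mathbf{g}_\ell(\mathbf{u})=\mathbf{c}_\ell+\hat{\mathbf{g}}_\ell(\mathbf{u})$, and consequently
\[
\mathbf{W}_\ell\mathbf{g}_\ell(\mathbf{u})=\mathbf{W}_\ell\hat{\mathbf{g}}_\ell(\mathbf{u})+\mathbf{b}_{\ell+1},\qquad \mathbf{b}_{\ell+1}:=\mathbf{W}_\ell\mathbf{c}_\ell .
\]
I then define the shifted next-layer functions $\tilde g^{(j)}_{\ell+1}(v):=g^{(j)}_{\ell+1}(v+b_{\ell+1,j})$. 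These are analytic on $\Omega_{\ell+1}-\mathbf{b}_{\ell+1}$, which is open, and they satisfy $\mathbf{g}_{\ell+1}(\mathbf{W}_\ell\mathbf{g}_\ell(\mathbf{u}))=\tilde{\mathbf{g}}_{\ell+1}(\mathbf{W}_\ell\hat{\mathbf{g}}_\ell(\mathbf{u}))$.

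Second, I proceed by induction from $\ell=1$ to $L-1$. At step $\ell$ the current function in layer $\ell$ is already the shifted one. I split off its constant term and push it into layer $\ell+1$ via the identity above. The weight matrices are unchanged throughout. The resulting layer $\ell$ has zero constant term, and the modified layer $\ell+1$ remains analytic on the shifted open set, which contains all transformed inputs that actually occur. At the last step the constant from layer $L-1$ enters $\bar{\mathbf{g}}_L$, which may keep a nonzero constant. This yields \eqref{eq:decoupled_analytic_no_constant}, with equality holding for every $\mathbf{x}\in\Omega$ because each step is a pointwise identity.

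Third, for the claim that all internal functions are convergent power series, I use real analyticity. Each modified function is a translate of an original analytic function, so around every point of its (shifted) domain it equals a convergent Taylor series. Removing a constant preserves convergence.

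The main obstacle is making ``the constant term'' well defined for a merely real analytic function. One has to choose an expansion point, and that point must be consistent with the shifts, since a shift changes the Taylor coefficients. I would handle this by always expanding around $0$ in the current (shifted) coordinates and defining $c^{(j)}_{\ell,0}$ as the value there. If $0$ is not in the domain, I would recenter by absorbing an additional offset into the next layer in the same way. In the polynomial case, which is the one emphasized just before the theorem, all of this is global and trivial: translates of polynomials are polynomials of the same degree, so the parameterization \eqref{eq:activation_bases} with a monomial basis is preserved.
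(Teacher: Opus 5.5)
Your proposal is correct and is essentially the paper's proof. The paper likewise strips the constant vector $\mathbf b_\ell$ from layer $\ell$, pushes $\hat{\mathbf b}_\ell=\mathbf W_\ell\mathbf b_\ell$ into the argument of layer $\ell+1$, and defines $\hat g^{(j)}_{\ell+1}(v)=g^{(j)}_{\ell+1}(v+\hat b_{\ell,j})-g^{(j)}_{\ell+1}(\hat b_{\ell,j})$ as the Taylor series about $\hat b_{\ell,j}$ without its $n=0$ term, which is exactly your translate-then-subtract-the-value-at-zero step.

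The only soft spot, which the paper shares, is the power-series claim: analyticity on an open set gives only local expansions. Your `recentering' fallback does not repair this. Moving the expansion point of layer $\ell$ requires an offset on that layer's \emph{input}, coming from layer $\ell-1$ or from a bias on $\mathbf W_0\mathbf x$ that the model lacks, not an offset pushed into the next layer. A global power-series statement therefore really needs a global-convergence assumption, such as the polynomial setting of Corollary~\ref{cor:poly_internal}.
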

\begin{proof}
The proof is given in \ref{app:proof_theorem}.
\end{proof}

While analytic functions generally admit infinite power‑series representations, restricting attention to polynomials yields the following practically useful corollary.

\begin{corollary}\label{cor:poly_internal}
    Under the assumptions of Theorem \ref{theo:bias_theorem} and supposing furthermore that for each layer $\ell$ the internal functions $g^{(j)}_{\ell}$ are polynomials of degree $d^{(j)}_{\ell}$, then $\mathbf{f}(\mathbf{x})$ admits an equivalent exact decoupled representation
    \begin{equation}
        \mathbf{f}(\mathbf{x}) = \mathbf{W}_L \; \bar{\mathbf{g}}_L(\mathbf{W}_{L-1} \; \hat{\mathbf{g}}_{L-1}(\hdots \hat{\mathbf{g}}_1(\mathbf{W}_0 \mathbf{x})\hdots)), \label{eq:decoupled_analytic_no_constant}
    \end{equation}
    where the constant terms of the internal functions of all but the last layer $\bar{\mathbf{g}}_{L}$, are equal to $0$ and the internal functions $\hat{g}^{(j_1)}_{\ell}$ and $\bar{g}^{(j_2)}_{L}$ remain polynomials of degree $d^{(j_1)}_{\ell}$ and $d^{(j_2)}_{L}$ respectively.
\end{corollary}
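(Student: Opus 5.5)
The plan is to make the bias-removal construction behind Theorem~\ref{theo:bias_theorem} explicit. The point to check is that it only shifts arguments of the internal functions by constants. Such shifts preserve polynomiality and degree. Polynomials are real analytic on all of $\mathbb{R}$, so the hypotheses of Theorem~\ref{theo:bias_theorem} hold with $\Omega_\ell=\mathbb{R}^{r_\ell}$. The corollary then follows by checking degrees in the concrete construction, rather than invoking the theorem as a black box, since the theorem itself only promises convergent power series.

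\textbf{The construction.} I would proceed layer by layer, from $\ell=1$ to $\ell=L-1$. At layer $\ell$, split each polynomial internal function as $g^{(j)}_\ell(u)=c^{(j)}_{\ell,0}+\tilde g^{(j)}_\ell(u)$, where $\tilde g^{(j)}_\ell(u)=g^{(j)}_\ell(u)-c^{(j)}_{\ell,0}$ has zero constant term and the same degree $d^{(j)}_\ell$. Writing $\mathbf{c}_\ell$ for the vector of constants, we get
\[
\mathbf{W}_\ell\mathbf{g}_\ell(\mathbf{u})=\mathbf{W}_\ell\tilde{\mathbf{g}}_\ell(\mathbf{u})+\mathbf{b}_{\ell+1},\qquad \mathbf{b}_{\ell+1}=\mathbf{W}_\ell\mathbf{c}_\ell .
\]
The constant vector $\mathbf{b}_{\ell+1}$ is absorbed into the next layer by redefining $g^{(j)}_{\ell+1}(v)\mapsto g^{(j)}_{\ell+1}(v+b_{\ell+1,j})$.

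\textbf{Why the degree is preserved.} If $p$ is a polynomial of degree $d$, then $p(\cdot+b)$ is again a polynomial of degree exactly $d$. Its leading coefficient is unchanged, as the binomial expansion of $(v+b)^d$ shows. The function itself is unchanged, so the overall map $\mathbf{f}$ is unchanged. The shifted layer $\ell+1$ may now have nonzero constant terms; those are handled at the next step of the induction. At layer $L$, the accumulated shift is absorbed into $\bar{\mathbf{g}}_L$, whose constant term is allowed to be nonzero. Hence $\bar g^{(j)}_L$ has degree $d^{(j)}_L$, and each $\hat g^{(j)}_\ell$, for $\ell<L$, has degree $d^{(j)}_\ell$ and zero constant term.

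\textbf{Bookkeeping to watch.} The only real subtlety is the order of operations at each layer. The shift induced by the previous layer must be applied before the new constant is stripped off. Doing it in that order ensures that the stripped function is the final $\hat g^{(j)}_\ell$ and that its degree is still $d^{(j)}_\ell$. The weights $\mathbf{W}_\ell$ are never changed, which matches the statement. Domain issues do not arise, because polynomials are defined everywhere. I would phrase this as an induction on $\ell$, with the inductive hypothesis being: layers $1,\ldots,\ell-1$ are bias-free with their original degrees, and layer $\ell$ is a shifted copy of the original layer.
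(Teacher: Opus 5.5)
Your proposal is correct and is essentially the paper's own argument. The appendix proof of Theorem~\ref{theo:bias_theorem} strips the constant terms of each layer, pushes the resulting constant vector (multiplied by $\mathbf{W}_\ell$) into the argument of the next layer, re-expands $g^{(j)}_{\ell+1}(\hat u+\hat b)-g^{(j)}_{\ell+1}(\hat b)$ around that shift, and inducts up to layer $L$; for polynomials this Taylor expansion terminates with the same leading coefficient, which is exactly your binomial-expansion degree check (both arguments tacitly need $d^{(j)}_\ell\ge 1$ for $\ell<L$, since a constant internal function becomes identically zero once its constant is removed).
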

The following example illustrates Corollary \ref{cor:poly_internal} for a two-layer system.
\begin{example}
    Consider the two-layer system
    \begin{gather}
        \mathbf{f}(\mathbf{x}) = \mathbf{W}_2\mathbf{g}_2(\mathbf{W}_1\mathbf{g}_1(\mathbf{W}_0\mathbf{x})) \nonumber \\
        \mathbf{W}_2 = \begin{bmatrix}
            1 & 2 \\
            3 & 4
        \end{bmatrix},\; \mathbf{W}_1 = \begin{bmatrix}
            2 & 1 \\
            0 & 1
        \end{bmatrix}, \; \mathbf{W}_0 = \begin{bmatrix}
            1/2 & 1 \\
            2 & 2
        \end{bmatrix} \nonumber \\
        \mathbf{g}_2(\mathbf{v}) = \begin{bmatrix}
            v_1 \\
            - v_2 + v^2_2
        \end{bmatrix}, \mathbf{g}_1(\mathbf{u}) = \begin{bmatrix}
            1 + 3u_1 + u^2_1 \\
            2 + 4u_2 - u^2_2
        \end{bmatrix}. \nonumber
    \end{gather}
    %It is possible to construct an equivalent system where only the internal functions of the second layer contain constant terms different from zero. Namely, 
    Using the construction outlined in the proof of Theorem \ref{theo:bias_theorem} it holds that
    %\begin{equation}
    %    \mathbf{b}_1 = \begin{bmatrix}
    %        1 \\
    %        2
    %    \end{bmatrix} \Rightarrow \hat{\mathbf{b}}_1 = \mathbf{W}_2\mathbf{b}_1 =  \begin{bmatrix}
    %        4 \\
    %        2
    %    \end{bmatrix}.\nonumber
    %\end{equation}
    %As a result, $\hat{\mathbf{g}}(\mathbf{u})$ becomes
    %\begin{equation}
    %    \hat{\mathbf{g}}_1(\mathbf{u}) = \begin{bmatrix}
    %        3u_1 + u^2_1 \\
    %        4u_2 - u^2_2
    %    \end{bmatrix} \nonumber
    %\end{equation}
    %and using equation \eqref{eq:construction_g_2_hat}
    %\begin{align}
    %    \bar{\mathbf{g}}_2(\mathbf{v}) = \begin{bmatrix}
    %        4 + v_1 \\
    %        2 + 3v_2 + v^2_2
    %    \end{bmatrix}. \nonumber
    %\end{align}
    %Thus, the equivalent system is given by
    \begin{gather}
        \mathbf{f}(\mathbf{x}) = \mathbf{W}_2\bar{\mathbf{g}}_2(\mathbf{W}_1\hat{\mathbf{g}}_1(\mathbf{W}_0\mathbf{x})),
        \nonumber \\
%        \mathbf{W}_2 = \begin{bmatrix}
%            1 & 2 \\
%            3 & 4
%        \end{bmatrix},\; \mathbf{W}_1 = \begin{bmatrix}
%            2 & 1 \\
%            0 & 1
%        \end{bmatrix}, \; \mathbf{W}_0 = \begin{bmatrix}
%            1/2 & 1 \\
%            2 & 2
%        \end{bmatrix}, \nonumber \\
        \bar{\mathbf{g}}_2(\mathbf{v}) = \begin{bmatrix}
            4 + v_1 \\
            2 + 3v_2 + v^2_2
        \end{bmatrix}, \hat{\mathbf{g}}_1(\mathbf{u}) = \begin{bmatrix}
            3u_1 + u^2_1 \\
            4u_2 - u^2_2
        \end{bmatrix}, \nonumber
    \end{gather}
    and only the internal functions of the last layer ($\bar{\mathbf{g}}_2$) have non-zero constants.
\end{example}
\noindent

Another useful property that we discuss here deals with the fact that when computing the ParaTuck‑$L$ decomposition, scaling ambiguities arise in the factor matrices $\mathbf{W}_{\ell}$, %for $\ell=0,1,\hdots,L$, 
as discussed in Section \ref{sec:ambiguities}.
In the context of the decoupled model in equation \eqref{eq:k-layer-decoupling}, scaling a row $\mathbf{W}^{i,:}_{\ell}$ by a factor $\alpha$ is equivalent to scaling the input of the corresponding activation function $g_{\ell}^{(i)}$ by $\alpha$. In other words, if the function $\mathbf{f}(x)$ admits an $L$-layer decoupled representation as in Definition \ref{def:k-layer-decoupling}, then the true internal functions are $g^{(i)}_{\ell}(u)$, but in practice the decomposition may yield functions of the form $g^{(i)}_{\ell}(\alpha_{\ell,i}u)$ where the scaling factors $\alpha_{\ell, i}$ are unknown. Because of this ambiguity, it is desirable for the chosen set of basis functions 
% \begin{equation}
$
    \{ \phi_{\ell,1}, \phi_{\ell,2}, \cdots, \phi_{\ell,d_{\ell}}\}$, 
    % \nonumber
% \end{equation}
to possess a property that makes it robust against such scalings. One such property is that the basis should be scale-invariant, meaning that scaling the input does not change the span of the basis functions, as described in the following definition.

\begin{definition}\label{def:scale_invariance}
    Let $\{ \phi_{1}(x), \cdots, \phi_{d}(x)\}$ be a set of basis functions. We call this basis scale-invariant if for any non-zero scalar $\alpha \in \mathbb{R}_0$ the linear span of the basis is preserved under scaling, i.e.,
    \begin{align}
        \text{span}\left(\{ \phi_{1}(x), \cdots, \phi_{d}(x)\}\right) = \text{span}\left(\{ \phi_{1}(\alpha x), \cdots, \phi_{d}(\alpha x)\}\right). \nonumber
    \end{align}
    In other words, scaling the input does not change the function space generated by the basis.
\end{definition}

Finally, ensuring that the chosen basis is scale-invariant according to Definition \ref{def:scale_invariance} guarantees that the expressive power of the assumed decoupled model is unchanged by the scaling ambiguities of the computed ParaTuck-$L$ decomposition.

\subsection{Choice of basis: polynomial and uniqueness}

 \noindent Motivated by the theoretical results of the previous subsection, we choose a polynomial basis, which ensures polynomial internal functions and thereby satisfies the conditions of Corollary \ref{cor:poly_internal}. In addition, the polynomial basis is required to be scale‑invariant according to Definition \ref{def:scale_invariance}. In this work, we focus on a monomial basis
 \begin{equation}
     \{x, x^2, \cdots, x^d\}, \nonumber
 \end{equation}
 which is also the setup considered in \cite{hollander2017multivariate} and \cite{dreesen2015decoupling}. In our study, alternative, `better conditioned' polynomial bases such as a scaled monomial basis or Bernstein basis \cite{lorentz2012bernstein} resulted in comparable experimental results to the monomial basis. Thus, due to simplicity, the focus of this work is on the monomial basis.

With a polynomial representation for the activation functions, the decoupled representation becomes a polynomial neural network \cite{shi2016dpn}.
%Polynomials have the nice property that for any polynomial of degree $d$, an affine transformation of the parameter results also in a polynomial of degree $d$:
%\[
%\widetilde{p}(u) = p(\alpha u + b), \quad
%\]
%for all $\alpha \neq 0$ and $b\in \mathbb{R}$.
%This has several consequences. 
%Rows of matrices $\mathbf{W}_{\ell}$, $\ell = 0,\ldots, L-1$ can be rescaled (multiplied by any non-zero constant) jointly with rescaling the input space of the internal functions, which also leads to (an equivalent) decoupled representation with polynomial bases.
For the single-layer case, the uniqueness of polynomial decoupled representations has been analyzed using tools of algebraic geometry \cite{comon2017identifiability}. In particular, if there are no constant terms ($c^{(j)}_{\ell,0} = 0$), the  polynomial decoupled representation is generically unique as long as $r_1 \le mn$.
The constant terms are not unique if $r_1 > n$, but can be treated separately in practice in the single-layer case \cite{zniyed2021tensor}.

%!TEX root = ../elsarticle-template-num.tex
\section{Practical CMTF framework}\label{sec:practical_CMTF}

While the contributions in previous sections are of theoretical nature, Sections \ref{sec:practical_CMTF} and \ref{sec:algorithm} provide practical algorithms for computing multi-layer decouplings. The proposed decoupling algorithms build on existing Alternating Least Squares (ALS) algorithms, more specifically, the CMTF algorithm introduced by Zniyed \cite{zniyed2021tensor} and the polynomial constraint algorithm introduced by Hollander~\cite{hollander2017multivariate}. Our algorithms take into account the particularities of the multilayer decoupling, use an improved order of updates, and propose a bilevel optimization approach for choosing the regularization parameter. Before introducing the algorithms however, we define in the next subsections the optimization problem to be solved.

\subsection{Solution strategy based on first-order information}

The correspondence between the decoupled representation and the ParaTuck-$L$ decomposition (see Proposition \ref{lemma:jac_multilayer}) suggests the following solution strategy.
\begin{enumerate}
    \item Build the Jacobian tensor $\mathcal{J} \in \mathbb{R}^{n \times m \times S}$ from   Jacobian evaluations in $S$ sampling points $\mathbf{x}^{(1)},\ldots, \mathbf{x}^{(S)}$.
    \item Compute a PT-$L$ decomposition of $\mathcal{J}$ to retrieve the factor matrices $\mathbf{W}_0,\ldots,\mathbf{W}_L$ and $\mathbf{G}^{(1)},\ldots, \mathbf{G}^{(L)}$.
    \item Estimate the internal functions $\mathbf{g}_\ell$ from factor matrices $\mathbf{G}^{(1)}$, $\mathbf{G}^{(2)}$, $\hdots$, $\mathbf{G}^{(L)}$, to determine their representation.
\end{enumerate}

However, there are important issues with the last step of the proposed procedure (estimation of the internal functions).
Unlike the single-layer case, the functions $\mathbf{g}_\ell$ cannot be estimated from the matrix $\mathbf{G}^{(\ell)}$ alone (for example, using the procedure suggested in  \cite{dreesen2015decoupling}). 
This is due to the slice-wise scaling ambiguities mentioned in Section \ref{sec:ambiguities}, which in fact, mix the evaluations of internal functions (see \cite{usevich2023tensor} for an example). Also, there is inherent non-uniqueness of the constant terms in \eqref{eq:activation_bases}. 

This paper tackles these issues by (a) imposing constraints on the factors  $\mathbf{G}^{(\ell)}$ (thanks to the function bases \eqref{eq:activation_bases}), which lead to a constrained ParaTuck-$L$ decomposition, and (b) keeping the nonzero constant terms in \eqref{eq:activation_bases} only for the last layer ($\ell=L$) as suggested by Theorem~\ref{theo:bias_theorem}.
We detail these steps below.

As in \cite{zniyed2021tensor}, we impose the following constraints on the functions. For the matrix $\mathbf{G}$ in \eqref{eq:struct_Dg1} the constraints have the following structure
\begin{align}
        \mathbf{G}^{(\ell)}_{:,j} &= \begin{bmatrix}
            0 & \phi'_{\ell, 1}(u^{(1)}_{\ell, j}) &  \hdots & \phi'_{\ell, d_{\ell}}(u^{(1)}_{\ell, j}) \\
            0 & \phi'_{\ell, 1}(u^{(2)}_{\ell, j}) & \hdots & \phi'_{\ell, d_{\ell}}(u^{(2)}_{\ell, j}) \\
            \vdots & \vdots & & \vdots \\
            0 & \phi'_{\ell, 1}(u^{(S)}_{\ell, j}) & \hdots & \phi'_{\ell, d_{\ell}}(u^{(S)}_{\ell, j}) \\
        \end{bmatrix}
        \cdot
        \begin{bmatrix}
            c^{(j)}_{\ell,0} \\
            c^{(j)}_{\ell,1} \\
            \vdots \\
            c^{(j)}_{\ell,d_{\ell}}
        \end{bmatrix}
        = \mathbf{X}^j_{\ell} \cdot \mathbf{c}^j_{\ell}, \label{eq:cnstr_col_D} 
\end{align}
where $\mathbf{X}^j_{\ell} \in \mathbb{R}^{S \times (d_{\ell} + 1)}$, $\mathbf{c}^j_{\ell} \in \mathbb{R}^{(d_{\ell} + 1) \times 1}$, for $\ell=1,2,\hdots,L$ and $\ell$, $j=1,2,\hdots,r_{\ell}$ and where the $\mathbf{u}^{(s)}_{\ell}$ are defined as in \eqref{eq:u_recursive}.

Note here that the constraints on the columns of the $\mathbf{G}^{(\ell)}$ matrices work across the frontal slices of the tensor $\mathcal{J}$. Because of this, the used method of enforcing the structure on the $\mathbf{G}^{(\ell)}$ matrices indirectly solves the slice-wise scaling ambiguities problem. More specifically, it is still possible that the $\mathbf{G}^{(\ell)}$ matrices are scaled by a factor $\alpha_{\ell}$, but this scaling factor is now the same over all frontal slices of $\mathcal{J}$, which is not the case without the constraints.

\subsection{Combining first- and zeroth-order information}

Inspired by \cite{zniyed2021tensor}, we combine first- and zeroth-order information of the system $\mathbf{f}(\mathbf{x})$ under consideration. The first-order information corresponds to the PT-$L$ decomposition of the tensor $\mathcal{J}$, as described by Proposition \ref{lemma:jac_multilayer}. For the zeroth-order information, consider the following matrix $\mathbf{F} \in \mathbb{R}^{n \times S}$
\begin{align}
        \mathbf{F} &= \begin{bmatrix}
            \mathbf{f}(\mathbf{x}^{(1)}) & \mathbf{f}(\mathbf{x}^{(2)}) & \cdots & \mathbf{f}(\mathbf{x}^{(S)})
        \end{bmatrix} \nonumber \\
        &= \mathbf{W}_L
        \cdot
        \begin{bmatrix}
            g^{(1)}_L(u^{(1)}_{L,1}) & \hdots & g^{(1)}_L(u^{(S)}_{L,1}) \\
            g^{(2)}_L(u^{(1)}_{L,2}) & \hdots & g^{(2)}_L(u^{(S)}_{L,2}) \\
            \vdots &  & \vdots \\
            g^{(r_L)}_L(u^{(1)}_{L,r_L}) & \hdots & g^{(r_L)}_L(u^{(S)}_{L,r_L})
        \end{bmatrix} \label{eq:matrix_F} \\
        &= \mathbf{W}_L \cdot \begin{bmatrix}
            \mathbf{g}_L(\mathbf{u}_{L}^{(1)}) & \mathbf{g}_L(\mathbf{u}_{L}^{(2)}) & \cdots & \mathbf{g}_L(\mathbf{u}_{L}^{(S)})
        \end{bmatrix} \nonumber \\
        &= \mathbf{W}_L \cdot \mathbf{R}^\top, \nonumber
\end{align}
where the $\mathbf{u}^{(s)}_{L}$ are defined as in \eqref{eq:u_recursive}.
 
The factor matrix $\mathbf{W}_L$
is the same factor matrix as for the PT-$L$ decomposition of the tensor $\mathcal{J}$. Our idea, explained below, is to  combine the PT-$L$ decomposition of $\mathcal{J}$ with the factorization of $\mathbf{F}$, which yields a \textit{coupled matrix-tensor factorization} (CMTF) \cite{liu2021tensors}.

Similarly to \eqref{eq:cnstr_col_D}, we can impose structure on $\mathbf{R}$.
If the internal functions are represented as in equation \eqref{eq:activation_bases} , then $\mathbf{R}$ can be expressed as
\begin{align}
        \mathbf{R}_{:,j} &= \begin{bmatrix}
            1 & \phi_{L, 1}(u^{(1)}_{L, j}) & \hdots & \phi_{L, d_{L}}(u^{(1)}_{L, j}) \\
            1 & \phi_{L,1}(u^{(2)}_{L, j}) & \hdots & \phi_{L, d_{L}}(u^{(2)}_{L, j}) \\
            \vdots & \vdots & & \vdots \\
            1 & \phi_{L,1}(u^{(S)}_{L, j}) & \hdots & \phi_{L, d_{L}}(u^{(S)}_{L, j}) \\
        \end{bmatrix}
        \cdot
        \begin{bmatrix}
            c^{(j)}_{L,0} \\
            c^{(j)}_{L,1} \\
            \vdots \\
            c^{(j)}_{L,d_{L}}
        \end{bmatrix}
        = \mathbf{Y}^j_L \cdot \mathbf{c}^j_{L}. \label{eq:cnstr_col_R}
\end{align}

Combining the tensor decomposition of $\mathcal{J}$ with the factorization of matrix $\mathbf{F}$ and adding the constraint on the columns of $\mathbf{R}$ yields the final CMTF optimization problem
\begin{align}
        \displaystyle{
        \min_{\mathclap{\substack{\{\mathbf{W}_{\ell}\}^{L}_{\ell=0},\\\{\mathbf{G}^{(\ell)}\}^{L}_{\ell=1}, \mathbf{R}}}}
        } 
        \;\;\;\;\;\;  & \lVert \mathcal{J} - \text{PT}_L(\mathbf{W}_{L},\ldots,\mathbf{W}_{0},\mathbf{G}^{(L)},\ldots,\mathbf{G}^{(1)}) \rVert^2 
        + \lambda \cdot \lVert \mathbf{F} - \mathbf{W}_L \cdot \mathbf{R}^\top \rVert ^2  \nonumber \\%[-1.5em] %&\quad\quad\quad\quad\quad\quad\quad\quad\quad\quad\quad\quad+ \lambda \cdot \lVert \mathbf{F} - \mathbf{W}_L \cdot \mathbf{R}^\top \rVert^2 \nonumber \\
        \operatorname{s.t.} \quad &\mathbf{G}^{(1)}_{:,j_1} = \mathbf{X}^{j_1}_1 \cdot \mathbf{c}^{j_1}_{1}, \;\;j_1 = 1,2,\hdots,r_1, \nonumber \\
        &\mathbf{G}^{(2)}_{:,j_2} = \mathbf{X}^{j_2}_2 \cdot \mathbf{c}^{j_2}_{2}, \;\;j_2 = 1,2,\hdots,r_2, \nonumber \\
        & \;\;\;\;\;\;\;\;\; \vdots \nonumber \\
        &\mathbf{G}^{(L)}_{:,j_L} = \mathbf{X}^{j_L}_L \cdot \mathbf{c}^{j_L}_{L}, \;\; j_L = 1,2,\hdots,r_L, \nonumber \\
        &\mathbf{R}_{:,j} = \mathbf{Y}^{j}_L \cdot \mathbf{c}^{j}_{L}, \;\;\;\;\; j = 1,2,\hdots,r_L.
        \label{opt:cmtf_PTk}
\end{align}
Similarly to \cite{zniyed2021tensor}, the parameter $\lambda$ determines how much weight is given to the matrix factorization.
In Section \ref{sec:choosing_lambda} we discuss how to choose the parameter $\lambda$ adaptively and we propose a strategy that allows to take into account evaluation metrics for the task under consideration.
 It is important to note here that the constraint on the columns of $\mathbf{R}$ allows to estimate the constant terms of the internal functions only in the last layer of the decoupling. Thus, only the last layer is responsible for correcting the bias and the constant terms of the internal functions in the other layers remain equal to their initial value.

\noindent By applying the tensor decomposition methods on tensors obtained from first-order information of a system's input-output functions, we obtain a solution strategy for `decoupling' the given system into structures having flexible activation functions. 
This work extends the tensor method solution strategies of \cite{dreesen2015decoupling}, \cite{zniyed2021tensor} and \cite{de2023compressing} from the single and two-layer case to the multi-layer case.

\section{Algorithm}\label{sec:algorithm}

%In this section we introduce $2$ multi-layer decoupling algorithms, PROJ-CMPT-$L$ and CONSTR-CMPT-$L$, along with a bilevel optimization scheme for selecting an appropriate $\lambda$. Both rely on alternating optimization but differ in their approach of satisfying the constraints in optimization problem \eqref{opt:cmtf_PTk}. PROJ-CMPT-$L$ does so by projecting the columns of factor matrices $\mathbf{G}^{(i)}$ and $\mathbf{R}$ onto the column spaces of $\mathbf{X}_i$ and $\mathbf{Y}_L$ respectively, while CONSTR-CMPT-$L$ enforces the constraints by directly updating the coefficient vectors $\mathbf{c}^j_i$.

In this section we introduce two multi-layer decoupling algorithms, PROJ-CMPT-$L$ and CONSTR-CMPT-$L$, along with a bilevel optimization scheme for selecting an appropriate $\lambda$. Both algorithms rely on alternating optimization and share several factor matrix updates, but differ in their approach of satisfying the constraints in optimization problem \eqref{opt:cmtf_PTk}: PROJ-CMPT-$L$ via projections of $\mathbf{G}^{(i)}$ and $\mathbf{R}$, CONSTR-CMPT-$L$ via direct updates of the coefficient vectors $\mathbf{c}^j_i$. The general alternating optimization scheme is introduced first and subsequent subsections give more details about the PROJ-CMPT-$L$, CONSTR-CMPT-$L$ and bilevel optimization respectively.

\subsection{Alternating minimization algorithm}\label{sec:CMPT-L}

\noindent To solve the CMTF optimization problem \eqref{opt:cmtf_PTk}  we choose an alternating minimization strategy.
At each step of the algorithm, factors are updated successively (for example update $\mathbf{W}_L$ while keeping $\mathbf{W}_{L-1},\ldots, \mathbf{W}_{0}, \mathbf{G}^{(L)},\ldots, \mathbf{G}^{(1)}$ fixed), as in the standard alternating least-squares method. 
%As mentioned in section \ref{sec:practical_CMTF} we introduce two algorithms in this work, the general structure of these algorithms however is the same and they differ only in how they handle the updating of the coefficients $\mathbf{c}_{\ell}$ and matrices $\mathbf{G}^{(\ell)}$ and $\mathbf{R}$, for $\ell=1,\hdots,L$. 

\begin{algorithm}[t!]
\caption{CMPT-$L$, alternating algorithm structure}\label{alg:CMPT-L}
    \begin{algorithmic}[1]
        \Require $\mathcal{J}$, $\mathbf{F}$, $rArr = \{r_\ell\}^{L}_{\ell = 1}$, $dArr = \{d_{\ell}\}^{L}_{\ell = 1}$, $\phi Arr = \{\{\{\phi^{(j)}_{\ell, d'}\}^{d_{\ell}}_{d' = 1}\}^{r_\ell}_{j = 1}\}^{L}_{\ell = 1}$, $samples$, $\lambda$
        \State Initialize  $\mathbf{R}, \mathbf{W}_0,\ldots,\mathbf{W}_{L},\mathbf{G}^{(1)}, \ldots, \mathbf{G}^{(L)}$
        \While{stopping criterion not met}
            \State $\mathbf{W}_0 = \text{arg}\displaystyle{\min_{\mathbf{W}_0}} \;\lVert  \text{unfold}_2(\mathcal{J})^\top - \mathbf{M}^{(0)}_{\mathbf{W}} \cdot \mathbf{W}_0 \rVert^2$
            \\
            \hrulefill
            \For{$\ell=1,2, \hdots, L-1$}
                \State $\mathbf{G}^{(\ell)}, \mathbf{c}_{\ell}$ $\leftarrow$ Update\_$\mathbf{c}_\ell$($\mathcal{J}$, $\{\mathbf{W}_{\ell}\}^L_{\ell=0}$, $\{\mathbf{G}^{(\ell)}\}^L_{\ell=1}$, $\ell$)
                \State $\mathbf{W}_{\ell}  = \text{arg}\displaystyle{\min_{\mathbf{W}_{\ell}}} \lVert \text{vec}(\mathcal{J}) - \mathbf{M}^{(\ell)}_{\mathbf{W}} \cdot \text{vec}(\mathbf{W}_{\ell}) \rVert^2$
            \EndFor
            \\
            \hrulefill
            \State $\mathbf{G}^{(L)}, \mathbf{R}, \mathbf{c}_{L}$ $\leftarrow$ Update\_$\mathbf{c}_L$($\mathcal{J}$, $\mathbf{F}$, $\{\mathbf{W}_{\ell}\}^L_{\ell=0}$, $\{\mathbf{G}^{(\ell)}\}^L_{\ell=1}$, $\lambda$)
            \State $\mathbf{W}_L  = \text{arg}\displaystyle{\min_{\mathbf{W}_L}} \;\lVert  \text{unfold}_1(\mathcal{J}) - \mathbf{W}_L \cdot \mathbf{M}^{(L)}_{\mathbf{W}} \rVert^2 + \lambda \lVert \mathbf{F} - \mathbf{W}_L \cdot \mathbf{R}^{\top} \rVert^2$
        \EndWhile
        \Ensure $\{\mathbf{W}_{\ell}\}^{L}_{\ell = 0}$, $\{\mathbf{G}^{(\ell)}\}^{L}_{\ell = 1}$, $\mathbf{R}$, $\{\mathbf{c}_{\ell}\}^{L}_{\ell = 1}$
    \end{algorithmic}
\end{algorithm}

Algorithm \ref{alg:CMPT-L} gives the general structure of our introduced Coupled Matrix ParaTuck-$L$ (CMPT-$L$) algorithm. The algorithm uses two types of updates: (a) updating the factors $\mathbf{W}_{\ell}$ while ignoring the constraints, and (b) updating $\mathbf{G}^{(\ell)}$ and $\mathbf{c}_{\ell}$ (and $\mathbf{R}$ in case of last layer).
For the factor matrix updates (a), consider updating the factor $\mathbf{W}_{0}$.
We can show that the cost function, for fixed  $\mathbf{W}_{L},\ldots, \mathbf{W}_{1}, \mathbf{G}^{(L)},\ldots, \mathbf{G}^{(1)}$, can be expressed in a standard least-squares form as
\begin{align*}
 \| \mathcal{J} - \text{PT}_L(\mathbf{W}_{L},\ldots,\mathbf{W}_{0},\mathbf{G}^{(L)},\ldots,\mathbf{G}^{(1)}) \|^2
= 
\lVert  \text{unfold}_2(\mathcal{J})^\top - \mathbf{M}^{(0)}_{\mathbf{W}} \cdot \mathbf{W}_0 \rVert^2,
\end{align*}
for some matrix $\mathbf{M}^{(0)}_{\mathbf{W}}$.
Similarly, we can define the matrices, $\mathbf{M}^{(\ell_1)}_{\mathbf{W}}$ and $\mathbf{M}^{(\ell_2,s)}_{\mathbf{G}}$, for $\ell_1=0,1,\hdots,L$ and $\ell_2=1,2\hdots,L$, whose structure can be found in \ref{app:mat_proj}. Here, there is an important difference with the standard alternating least squares: the constraints in  \eqref{opt:cmtf_PTk} are nonlinear, because the matrices $\mathbf{X}^{j}_\ell$ and $\mathbf{Y}^{j}_L$ depend on the previous factors $\mathbf{W}_{\ell-1}, \ldots, \mathbf{W}_{0}$. We propose to relax this constraint by ignoring these nonlinear dependencies when updating the factor $\mathbf{W}_{\ell}$ (and keep the factors $\mathbf{W}_L, \ldots,\mathbf{W}_{\ell+1}$ and  $\mathbf{G}^{(L)}, \ldots,\mathbf{G}^{(\ell+1)}$ fixed).
%Second, we propose the following strategy for updating $\mathbf{G}^{(\ell)}$: we update $\mathbf{G}^{(\ell)}$ ignoring all the constraints, and then project on the set of contraints.
%\item CMTF-CONSTR-PT-$L$ algorithm (section \ref{sec:CMTF-CONSTR-PT-k}): we update $\mathbf{G}^{(\ell)}$ by solving a least-squares problem over $\mathbf{c}_\ell^j$.
%\end{itemize}
%This strategy is similar to the algorithm in  \cite{zniyed2021tensor} for the single-layer case.
%and  \cite{hollander2017multivariate} in the single-layer case.

In the multilayer case, the order of the updates plays an important role. 
Algorithm \ref{alg:CMPT-L} updates the factors in the following order:
\[
\mathbf{W}_{0},\mathbf{G}^{(1)},\mathbf{W}_{1},\ldots,\mathbf{G}^{(L)},\mathbf{W}_{L},
\]
while, e.g., the order in \cite{zniyed2021tensor} is $\mathbf{W}_{1},\mathbf{W}_{0},\mathbf{G}^{(1)}$  (for $L=1$).

In Algorithm \ref{alg:CMPT-L}, the input $rArr$ is a list specifying the number of internal functions in each layer of the decoupled representation, these values correspond to the ParaTuck ranks of the PT-$L$ decomposition to be computed. The list $dArr$ specifies the degrees of the internal functions in each layer and $\phi Arr$ defines the sets of basis functions used to represent the internal functions in each of the $L$ layers.

Finally, in Algorithm~\ref{alg:CMPT-L} the updates of $\mathbf{G}^{(\ell)}$, $\mathbf{c}_{\ell}$ and $\mathbf{R}$ are left implicit through the functions Update\_$\mathbf{c}_{\ell}$(.) and Update\_$\mathbf{c}_{L}$(.). 
Different strategies can be used to update $\mathbf{G}^{(\ell)}$, $\mathbf{c}_{\ell}$ and $\mathbf{R}$, yielding different algorithms. 
The following subsections introduce two strategies, which as discussed at the start of Section~\ref{sec:practical_CMTF}, yields algorithms that can be seen as multi-layer generalizations of those introduced in \cite{zniyed2021learning} and \cite{hollander2017multivariate}.

\subsection{PROJ-CMPT-$L$ algorithm}\label{sec:CMTF-PT-k}
\noindent The PROJ-CMPT-$L$ algorithm updates $\mathbf{G}^{(\ell)}$ and $\mathbf{c}_{\ell}$ by first updating $\mathbf{G}^{(\ell)}$ in a least square sense, ignoring the constraints, and then projecting $\mathbf{G}^{(\ell)}$ onto the set of constraints. For the case when $\ell = L$, the matrix $\mathbf{R}$ is included in the updates.

Algorithms \ref{alg:c_ell_proj} and \ref{alg:c_L_proj} show the Update\_$\mathbf{c}_{\ell}$(.) and Update\_$\mathbf{c}_{L}$(.) functions for the PROJ-CMPT-$L$ algorithm. The least-squares update and projection steps are separated by a horizontal line.
%For projection of the factors on the set of constraints, Algorithm~\ref{alg:CMTF-PT-K_algorithm} uses a projection strategy, shown  in Algorithms~\ref{alg:projection_strat_l}--\ref{alg:projection_strat_L}. 
In these algorithms, $\mathbf{X}_{\ell}$, for $\ell=1,2,\hdots,L$ and $\mathbf{Y}_L$ are block-diagonal matrices with diagonal elements $\mathbf{X}^{1}_{\ell}$, $\mathbf{X}^{2}_{\ell}$,$\hdots$, $\mathbf{X}^{r_{\ell}}_{\ell}$ and $\mathbf{Y}^{1}_L$, $\mathbf{Y}^{2}_L$,$\hdots$, $\mathbf{Y}^{r_L}_L$, constructed according to  \eqref{eq:cnstr_col_D} and \eqref{eq:cnstr_col_R}.

\begin{algorithm}
\caption{Update\_$\mathbf{c}_{\ell}$ for PROJ-CMPT-$L$}\label{alg:c_ell_proj}
    \begin{algorithmic}[1]
        \Require $\mathcal{J}$, $\{\mathbf{W}_{\ell}\}^L_{\ell=0}$, $\{\mathbf{G}^{(\ell)}\}^L_{\ell=1}$, $\ell$
            \State \parbox[t]{\linewidth}{$\mathbf{G}^{(\ell)}_{s,:} = \text{arg}\displaystyle{\min_{\mathbf{G}^{(\ell)}_{s,:}}} \lVert \text{vec}(\mathcal{J}_{:,:,s}) - \mathbf{M}^{(\ell,s)}_{\mathbf{G}} \ (\mathbf{G}^{(\ell)}_{s,:})^\top \rVert^2$, for $s=1,2,\hdots,S$}
            \\ \hrulefill
            \State Compute $\mathbf{X}^j_{\ell}$, for $j=1,2,\hdots,r_{\ell}$
            \State \parbox[t]{\linewidth}{$\mathbf{c}_{\ell} = \text{arg}\displaystyle{\min_{\mathbf{c}_{\ell}}} \lVert \text{vec}(\mathbf{G}^{(\ell)}) - \mathbf{X}_{\ell} \cdot \mathbf{c}_{\ell} \rVert^2$}
            \State Compute $\mathbf{G}^{(\ell)}_{:,j} = \mathbf{X}^j_{\ell} \cdot \mathbf{c}^{j}_{\ell}$, for $j=1,2,\hdots,r_{\ell}$
        \Ensure $\mathbf{G}^{(\ell)}$, $\mathbf{c}_{\ell}$
    \end{algorithmic}
\end{algorithm}

\begin{algorithm}
\caption{Update\_$\mathbf{c}_{L}$ for PROJ-CMPT-$L$}\label{alg:c_L_proj}
    \begin{algorithmic}[1]
        \Require $\mathcal{J}$, $\mathbf{F}$, $\{\mathbf{W}_{\ell}\}^L_{\ell=0}$, $\{\mathbf{G}^{(\ell)}\}^L_{\ell=1}$, $\mathbf{R}$, $\lambda$
            \State \parbox[t]{\linewidth}{$\mathbf{G}^{(L)}_{s,:}  = \text{arg}\displaystyle{\min_{\mathbf{G}^{(L)}_{s,:}}} \lVert \text{vec}(\mathcal{J}_{:,:,s}) - \mathbf{M}^{(L,s)}_{\mathbf{G}} \cdot (\mathbf{G}^{(L)}_{s,:})^\top\rVert^2$, for $s=1,2,\hdots,S$}
           \State $\mathbf{R} = \text{arg}\displaystyle{\min_{\mathbf{R}}} \lVert \mathbf{F} - \mathbf{W}_L \cdot \mathbf{R}^{\top} \rVert^2$
            \\ \hrulefill
            \State Compute $\mathbf{X}^j_{L}$,$\mathbf{Y}^j_L$, for $j=1,2,\hdots,r_L$
            \State \parbox[t]{\linewidth}{$\mathbf{c}_{L} = \text{arg}\displaystyle{\min_{\mathbf{c}_{L}}} \lVert \text{vec}(\mathbf{G}^{(L)}) \text{-} \mathbf{X}_L \mathbf{c}_{L} \rVert^2 +  \lambda \lVert \text{vec}(\mathbf{R}) \text{-} \mathbf{Y}_L  \mathbf{c}_{L} \rVert^2$}
            \State Compute $\mathbf{G}^{(L)}_{:,j} = \mathbf{X}^j_{L}  \mathbf{c}^{j}_{L}$,  $\mathbf{R}^{:,j} = \mathbf{Y}^j_L \mathbf{c}^j_{L}$,  $j=1,\ldots,r_L$
        \Ensure $\mathbf{G}^{(L)}$, $\mathbf{R}$, $\mathbf{c}_{L}$
    \end{algorithmic}
\end{algorithm}

\subsection{CONSTR-CMPT-$L$ algorithm}\label{sec:CMTF-CONSTR-PT-k}

\noindent In the CONSTR-CMPT-$L$ algorithm, we update the matrix $\mathbf{G}^{(\ell)}$ in such a way that it already satisfies the constraints (similarly to the approach of \cite{hollander2017multivariate} for the single-layer case).
More precisely, we assume that the matrix $\mathbf{G}^{(\ell)}$ satisfies the constraints in \eqref{opt:cmtf_PTk} and minimize the function over $\mathbf{c}_{\ell}$ instead.
For example, the Jacobian-related term can be rewritten as 
\[
\| \mathcal{J} - \text{PT}_L(\mathbf{W}_{L},\ldots,\mathbf{W}_{0},\mathbf{G}^{(L)},\ldots,\mathbf{G}^{(1)}) \|^2  =
\|\text{vec}(\mathcal{J}) - (\mathbf{M}^{(\ell)}_{\mathbf{C}})_0 \cdot \mathbf{c}_{\ell}\|^2
\]
where the matrix $(\mathbf{M}^{(\ell)}_{\mathbf{C}})_0 $ is provided in \ref{app:mat_constr}.

In the case of $\mathbf{c}_{L}$, the matrix factorization of $\mathbf{F}$ also needs to be taken into account, adding to the update formula
\begin{align}
       \text{vec}(\mathbf{F}^\top) = \text{vec}(\mathbf{R} \cdot \mathbf{W}^\top_L) 
       = (\mathbf{W}_L \otimes \mathbf{I}_L) \cdot \text{vec}(\mathbf{R}) = (\mathbf{W}_L \otimes \mathbf{I}_L) \cdot \mathbf{Y}_L \cdot \mathbf{c}_{L}, \label{eq:struct_incorp_update_R_cgk}
\end{align}
where the matrix $\mathbf{Y}_L \in \mathbb{R}^{r_L S \times r_L(d_{L} + 1)}$ is constructed as in Algorithm~\ref{alg:c_L_proj}.

Algorithms~\ref{alg:c_ell_constr} and \ref{alg:c_L_constr} show the Update\_$\mathbf{c}_{\ell}$(.) and Update\_$\mathbf{c}_{L}$(.) functions for the CONSTR-CMPT-$L$ algorithm. Compared to the PROJ-CMPT-$L$ algorithm, there is no projection strategy and the updates for $\mathbf{G}^{(\ell)}$ (and $\mathbf{R}$) are replaced by directly updating the coefficients $\mathbf{c}_{\ell}$. Similarly to the PROJ-CMPT-$L$ algorithm, $\mathbf{Y}_L$ is a block-diagonal matrix with diagonal elements $\mathbf{Y}^{1}_L, \mathbf{Y}^{2}_L, \hdots, \mathbf{Y}^{r_L}_L$ (analogous for $\mathbf{X}_{\ell})$.

\begin{algorithm}
\caption{Update\_$\mathbf{c}_{\ell}$ for CONSTR-CMPT-$L$}\label{alg:c_ell_constr}
    \begin{algorithmic}[1]
        \Require $\mathcal{J}$, $\{\mathbf{W}_{\ell}\}^L_{\ell=0}$, $\{\mathbf{G}^{(\ell)}\}^L_{\ell=1}$, $\ell$
            \State Compute $\mathbf{X}^j_{\ell}$, for $j=1,2,\hdots,r_{\ell}$
            \State $\mathbf{c}_{\ell} = \text{arg}\displaystyle{\min_{\mathbf{c}_{\ell}}} \lVert \text{vec}(\mathcal{J}) - (\mathbf{M}^{(\ell)}_{\mathbf{C}})_0 \cdot \mathbf{c}_{\ell} \rVert^2$
            \State Compute $\mathbf{G}^{(\ell)}_{:,j} = \mathbf{X}^j_{\ell} \cdot \mathbf{c}^{j}_{\ell}$, for $j=1,2,\hdots,r_{\ell}$
        \Ensure $\mathbf{G}^{(\ell)}$, $\mathbf{c}_{\ell}$
    \end{algorithmic}
\end{algorithm}

\begin{algorithm}
\caption{Update\_$\mathbf{c}_{L}$ for CONSTR-CMPT-$L$}\label{alg:c_L_constr}
    \begin{algorithmic}[1]
        \Require $\mathcal{J}$, $\mathbf{F}$, $\{\mathbf{W}_{\ell}\}^L_{\ell=0}$, $\{\mathbf{G}^{(\ell)}\}^L_{\ell=1}$, $\lambda$
            \State Compute $\mathbf{X}^j_{L}$,$\mathbf{Y}^j_L$, for $j=1,2,\hdots,r_L$
            \State \parbox[t]{\linewidth}{
            {
                $
               \displaystyle
               \mathbf{c}_{L} = \text{arg}\displaystyle{\min_{\mathbf{c}_{L}}} \;\; \lVert \text{vec}(\mathcal{J}) - (\mathbf{M}^{(L)}_{\mathbf{C}})_0 \cdot \mathbf{c}_{L} \rVert^2 + \lambda \cdot \lVert \text{vec}(\mathbf{F}^\top) - (\mathbf{W}_L \otimes \mathbf{I}_{S}) \cdot \mathbf{Y}_L \cdot \mathbf{c}_{L} \rVert^2 \nonumber
               $
            }
            }
            \State Compute $\mathbf{G}^{(L)}_{:,j} = \mathbf{X}^j_{L}  \mathbf{c}^{j}_{L}$,  $\mathbf{R}^{:,j} = \mathbf{Y}^j_L \mathbf{c}^j_{L}$,  $j=1,\ldots,r_L$
        \Ensure $\mathbf{G}^{(L)}$, $\mathbf{R}$, $\mathbf{c}_{L}$
    \end{algorithmic}
\end{algorithm}

\subsection{Stopping criteria and choosing $\lambda$}\label{sec:choosing_lambda}
\noindent
In practical scenarios, the PROJ-CMPT-$L$ and CONSTR-CMPT-$L$ algorithm need a stopping criterion and a value for $\lambda$.
The optimal $\lambda$ will typically be unknown, but %needs to be neither small nor large to decrease the loss of accuracy while preventing overfitting.
should strike a balance between maintaining accuracy and preventing overfitting.
We propose to choose $\lambda$  adaptively, starting from a small value (that gives more weight to the Jacobian term), and increasing it over the iterations, as detailed below.

%It is favorable to start from a low value for $\lambda$ and increase it over time while making sure the objective function is still decreasing. Ideally, this aspect of $\lambda$ needs to be taken into account by the chosen stopping criterion.

We propose a two-stage (bilevel) optimization approach (see Algorithm~\ref{alg:practical_use_algorithm}, inputs $rArr$, $dArr$ and $\phi Arr$ are as mentioned in Section \ref{sec:CMPT-L}).
At the first (inner) stage, the value of $\lambda$ is fixed, and the chosen decoupling algorithm, denoted $get\_decoupling(..)$, is run until the inner stopping criterion is satisfied.
To account for possible non-monotonicity of the cost function value in decoupling iterations, the following inner stopping criterion is used: stop if the objective function value has not decreased for a fixed number of iterations or if a maximum number of iterations is reached.

At the second (outer) stage, the value of $\lambda$ is increased while $stop\_metric(t)$ $\leq$ $stop\_metric(t - 1)$, which checks the
performance of the computed decoupling based on some metric. 
The metric can be different from the cost function, can be adapted to a particular task, and can be more expensive to compute than the cost function.
For example, the accuracy on the validation set can serve as a good metric for the network compression task, as will be shown in the next section. 

\begin{algorithm}
\caption{Two-stage decoupling with adaptive $\lambda$}\label{alg:practical_use_algorithm}
\begin{algorithmic}[1]
        \Require $\mathcal{J}$, $\mathbf{F}$, $rArr = \{r_\ell\}^{L}_{\ell = 1}$, $dArr = \{d_{\ell}\}^{L}_{\ell = 1}$, $\phi Arr = \{\{\{\phi^{(j)}_{\ell, d'}\}^{d_{\ell}}_{d' = 1}\}^{r_\ell}_{j = 1}\}^{L}_{\ell = 1}$, $samples$
        \State $t \gets 0$
        \State $\lambda \gets \text{initialize to small value}$
        \State $\beta \gets \text{initialize to value } > 1$
        \State $stop\_metric(t-1) \gets \text{large value}$
        \State $stop\_metric(t) \gets stop\_metric(t-1) + 1$
        \State $Params \gets (\mathcal{J},\;\mathbf{F},\;rArr, \;dArr,\;\phi Arr, \;samples)$ 
        \While{$stop\_metric(t) \leq stop\_metric(t-1)$}
            \State $decoupling(t+1) \gets get\_decoupling(Params, \lambda)$
            \State $\lambda \gets \lambda \cdot \beta$
            \State $t \gets t + 1$
            \State $stop\_metric(t) \gets comp\_metric(decoupling(t))$
        \EndWhile
        \Ensure $decoupling(t-1)$
    \end{algorithmic}
\end{algorithm}

\section{Experiments}
\noindent In this section, we validate our algorithms on synthetic examples, the Silverbox nonlinear system identification benchmark~\cite{wigren2013three} and neural networks trained on MNIST~\cite{deng2012mnist} and FashionMNIST data~\cite{xiao2017fashion}.

\subsection{General setup}
\noindent For the experiments, the CMTF decoupling algorithm introduced by Zniyed et al.~\cite{zniyed2021tensor} is referred to as \textit{Zniyed} and the polynomial constraint decoupling algorithm introduced by Hollander \cite{hollander2017multivariate} is referred to as \textit{Hollander}. Since the Hollander algorithm uses only first-order information, an additional bias correction is needed, we apply the bias $\mathbf{b} \in \mathbb{R}^n$,
\begin{equation}
    \mathbf{b} = \dfrac{1}{S}\mathbf{1}^{\top}_S \left(\mathbf{F} - \mathbf{W}_1 \mathbf{R}^{\top}\right)^{\top}. \nonumber
\end{equation}
The algorithms are prefixed with \textit{TS-} if bi-level optimization (Algorithm~\ref{alg:practical_use_algorithm}) is used, so \textit{TS-Zniyed}, \textit{TS-PROJ-CMPT-$L$} and \textit{TS-CONSTR-CMPT-$L$} refer to the bi-level execution for the respective decoupling algorithm. If multiple configurations of the \textit{TS-PROJ-CMPT-}$L$ or \textit{TS-CONSTR-CMPT-}$L$ algorithms are used in an experiment, e.g., \textit{TS-PROJ-CMPT-}$1$ and \textit{TS-PROJ-CMPT-}$2$, then \textit{TS-PROJ-...} or \textit{TS-CONSTR-...} are used to refer to these configurations in general.

The bi-level algorithms are initialized with $\lambda=1e-6$ and $\beta = 100$. 
The entries of the factor matrices are initialized with random values between $0.1$ and $10$. 
The used $stop\_metric$, number of layers $L$, PT-ranks $r_{\ell}$ and internal degrees $d_{\ell}$ are discussed in the respective experiments. 
The relative error on the tensor $\mathcal{J}$ and matrix $\mathbf{F}$ are defined as
\begin{equation}
    \text{Error}(\mathcal{J}) = \dfrac{\lVert \mathcal{J} - \hat{\mathcal{J}} \rVert^2}{\lVert \mathcal{J} \rVert^2}, \; \text{Error}(\mathbf{F}) = \dfrac{\lVert \mathbf{F} - \hat{\mathbf{F}} \rVert^2}{\lVert \mathbf{F} \rVert^2}. \nonumber
\end{equation}
To evaluate how well a decoupled model approximates a function $\mathbf{f}(\mathbf{x})$, a relative root mean-squared error $e_i$ per output $f_i(\mathbf{x})$ is used, defined as a percentage
\begin{equation}
    e_i = \sqrt{\dfrac{\sum^S_{s=1}(f_i(\mathbf{x}^{(s)}) - \widehat{\mathbf{f}}_i(\mathbf{x}^{(s)})^2}{\sum^S_{s=1}(f_i(\mathbf{x}^{(s)}) - \mathbb{E}[f_i])^2}} \times 100, \nonumber
\end{equation}
with $\widehat{f}(\mathbf{x})$ the computed decoupling that approximates $\mathbf{f}(\mathbf{x})$.
%Throughout the experiments, two types of basis functions are used, polynomial and trigonometric. The corresponding sets of basis functions, called Poly and Trig, are defined as
%\begin{gather}
%    \text{Poly} = \{u, u^2, \hdots, u^{d_\ell} \} 
%\end{gather}
%and $\text{Trig} =$
%\[
%\begin{cases} 
% \left\{\cos(u), \sin(u), \cos(2u), \hdots, \cos\left(k u\right) \right\}& d_{\ell} \text{ odd}\\
%\left\{\cos(u), \sin(u), \cos(2u), \hdots, \cos\left(k u\right), \sin\left(k u\right) \right\}& d_{\ell} \text{ even}
%\end{cases}    
%\]
%and $k = \lceil d_\ell/2 \rceil$. For example, 
%if $d_{\ell}$ = 3, the internal function $g^{(j)}_{\ell}$ becomes
%\begin{align}
%    &\text{Poly: } g^{(j)}_{\ell}(u) = c^{(j)}_{\ell, 0} + c^{(j)}_{\ell, 1} u + c^{(j)}_{\ell, 2} u^2 + c^{(j)}_{\ell, 3} u^3, \nonumber \\
%    &\text{Trig: } g^{(j)}_{\ell}(u) = c^{(j)}_{\ell, 0} + c^{(j)}_{\ell, 1} \cos(u) + c^{(j)}_{\ell, 2} \sin(u) + c^{(j)}_{\ell, 3} \cos(2u) \nonumber
%\end{align}
%In each experiment, the same basis functions are used for all internal functions, so either all polynomial or all trigonometric (degrees can vary per layer). 
%The algorithms are implemented in PyTorch.

\subsection{Synthetic examples}
This section shows regression results for two-layer decouplings computed of three synthetic, two-layer examples $\mathbf{f}_1(\mathbf{x})$, $\mathbf{f}_2(\mathbf{x})$ and $\mathbf{f}_3(\mathbf{x})$, shown in \ref{app:synthetic}. The two-layer decouplings are computed using the TS-PROJ-CMPT-$2$ and TS-CONSTR-CMPT-$2$ algorithms, with minimum $10$ and maximum $500$ iterations with execution stopped if the objective function does not decrease for $50$ consecutive iterations. The Jacobian tensor $\mathcal{J}\in \mathbb{R}^{n \times m \times S}$ and zeroth-order information matrix $\mathbf{F} \in \mathbb{R}^{n \times S}$ are constructed using $S=30$ sampling points, sampled uniformly from $[-1,1]^m$. For the outer stopping criteria, a validation set of $30$ sampling points is constructed, also sampled uniformly from $[-1,1]^m$ and the criterion corresponds to the sum of the relative root mean-squared errors on the outputs $e_1 + e_2 + \cdots + e_n$. For each synthetic system, the algorithms compute $30$ two-layer decouplings where each execution samples $S$ new points and a new validation set. The used configurations for computing the two-layer decouplings correspond to the configuration of the actual underlying system, i.e., $\mathbf{f}_1(\mathbf{x})$ has ParaTuck ranks $(2,2)$ with degree-$5$ polynomials in the first layer and degree $2$ polynomials in the second so the algorithms use the same ParaTuck ranks and degrees of polynomials.

\begin{table}[!ht]
    \centering
    \scalebox{0.90}{
    \begin{tabular}{|c|c|c|c|c|c|c|c|}
        \hline
        \multirow{2}{*}{PROJ} & \multicolumn{3}{c|}{Results} & \multirow{2}{*}{CONSTR} & \multicolumn{3}{|c|}{Results} \\ 
        \cline{2-4}\cline{6-8}
        & Mean & Median & Std &  & Mean & Median & Std \\
        \hline
        \hline
        $\mathbf{f}_1(\mathbf{x})$ & \multicolumn{3}{c|}{} & & \multicolumn{3}{c|}{}  \\
        \cline{1-4}\cline{6-8}
        Error($\mathcal{J}$) & $0.0337$ & $0.0138$ & $0.0517$ & & $\mathbf{0.00398}$ & $\mathbf{0.00247}$ & $\mathbf{0.00550}$ \\
        Error($\mathbf{F}$) & $0.0938$ & $0.0565$ & $0.108$ & & $\mathbf{0.0352}$ & $\mathbf{0.0152}$ & $\mathbf{0.0782}$ \\
        \cline{1-4}\cline{6-8}
        $e_1$ (\%) & $\mathbf{0.887}$ & $\mathbf{0.922}$ & $\mathbf{0.178}$ & & $0.997$ & $0.856$ & $0.781$ \\
        $e_2$ (\%) & $\mathbf{0.874}$ & $\mathbf{0.908}$ & $\mathbf{0.173}$ & & $1.02$ & $0.880$ & $0.929$ \\
        \hline
        \hline
        $\mathbf{f}_2(\mathbf{x})$ & \multicolumn{3}{c|}{} & & \multicolumn{3}{c|}{}  \\
        \cline{1-4}\cline{6-8}
        Error($\mathcal{J}$) & $0.00143$ & $0.000412$ & $0.00203$ & & $\mathbf{0.000268}$ & $\mathbf{4.54e-5}$ & $\mathbf{0.000596}$ \\
        Error($\mathbf{F}$) & $0.00862$ & $0.00153$ & $0.0154$ & & $\mathbf{0.00180}$ & $\mathbf{0.000312}$ & $\mathbf{0.00429}$ \\
        \cline{1-4}\cline{6-8}
        $e_1$ (\%) & $\mathbf{0.807}$ & $\mathbf{0.816}$ & $\mathbf{0.239}$ & & $0.904$ & $0.920$ & $0.218$ \\
        $e_2$ (\%) & $\mathbf{0.806}$ & $\mathbf{0.816}$ & $\mathbf{0.240}$ & & $0.904$ & $0.916$ & $0.218$ \\
        $e_3$ (\%) & $\mathbf{0.807}$ & $\mathbf{0.815}$ & $\mathbf{0.239}$ & & $0.904$ & $0.921$ & $0.218$ \\
        \hline
        \hline
        $\mathbf{f}_3(\mathbf{x})$ & \multicolumn{3}{c|}{} & & \multicolumn{3}{c|}{}  \\
        \cline{1-4}\cline{6-8}
        Error($\mathcal{J}$) & $\mathbf{0.00670}$ & $0.00127$ & $\mathbf{0.00979}$ & & $0.00717$ & $\mathbf{0.000498}$ & $0.0179$ \\
        Error($\mathbf{F}$) & $\mathbf{0.0187}$ & $\mathbf{0.00176}$ & $\mathbf{0.0371}$ &  & $0.0509$ & $0.00434$ & $0.100$ \\
        \cline{1-4}\cline{6-8}
        $e_1$ (\%) & $\mathbf{1.03}$ & $\mathbf{1.04}$ & $0.0585$ & & $0.983$ & $1.02$ & $\mathbf{0.483}$ \\
        $e_2$ (\%) & $\mathbf{1.03}$ & $\mathbf{1.04}$ & $0.0572$ & & $0.976$ & $1.03$ & $\mathbf{0.482}$ \\
        $e_3$ (\%) & $\mathbf{1.03}$ & $\mathbf{1.04}$ & $0.0578$ & & $0.979$ & $1.03$ & $\mathbf{0.482}$ \\
        \bottomrule
    \end{tabular}
    }
    \caption{Mean, median and standard deviation (Std) results of the synthetic examples $\mathbf{f}_1(\mathbf{x})$, $\mathbf{f}_2(\mathbf{x})$ and $\mathbf{f}_3(\mathbf{x})$, shown in appendix \ref{app:synthetic}, over $30$ runs of the TS-PROJ-CMPT-$2$ and TS-CONSTR-CMPT-$2$ algorithms. The best mean, median and Std results for each row are highlighted in bold.}
    \label{tab:synth_example}
\end{table}

Table \ref{tab:synth_example} shows the mean, median and standard deviation (Std) of the Error($\mathcal{J}$), Error($\mathbf{F}$) and validation set errors on the outputs, for the TS-PROJ-... and TS-CONSTR-... algorithms on the three synthetic examples. For $\mathbf{f}_1(\mathbf{x})$ and $\mathbf{f}_2(\mathbf{x})$, the Error($\mathcal{J}$) and Error($\mathbf{F}$) results of the CONSTR algorithm show an improvement of about an order of magnitude compared to the PROJ algorithm, while for $\mathbf{f}_3(\mathbf{x})$ the results are comparable. The results for the errors on the outputs are comparable for the three examples for both algorithms, yielding mean and median errors around $1\%$. The only noticeable difference in the output errors between the algorithms is in the standard deviation of the PROJ algorithm for $\mathbf{f}_1(\mathbf{x})$ being $0.2$ compared to $\approx 0.8$ for the CONSTR algorithm. 
Overall, the results in Table \ref{tab:synth_example} show that for the three synthetic examples both algorithms are performing well.

\subsection{System identification: Silverbox benchmark}
Similar to \cite{decuyper2022decoupling}, we will fit a Polynomial NARX (PNARX) model to the data of the Silverbox benchmark~\cite{wigren2013three}. 
In general, a NARX model aims to estimate the output $y(t)$ at timepoint $t$ based on previous output estimates $y(t-i)$, for $i=1,2,\hdots,n_y$ as well as the current and previous inputs $x(t-j)$, for $j=0,1,2,\hdots,n_x$. 
Algebraically this can be stated as
\begin{equation}
    y(t) = F(x(t), x(t-1), \hdots, x(t-n_x), y(t-1), \hdots, y(t-n_y)) + \epsilon_t, \nonumber
\end{equation}
where, in the case of PNARX models, the model representation F(.) corresponds to a multivariate polynomial.

For fitting the Silverbox benchmark data, a polynomial basis of degree $3$ is used with $n_x=1$, $n_y=3$ and the inclusion of all monomials, leading to a model with $55$ parameters. This setup corresponds to the PNARX configuration used in \cite{decuyper2022decoupling} for the Silverbox benchmark to allow for direct comparison of decoupling results. The reference PNARX model trained for this work, with which the decouplings are compared, has a prediction error of $0.88\%$ on the test set. Finally, the Silverbox benchmark contains a set of training data, of which $S=200$ points (uniformly sampled without replacement) are used to construct $\mathcal{J} \in \mathbb{R}^{1 \times 5 \times 200}$ and the matrix $\mathbf{F}^{1 \times 200}$. For evaluation, the decouplings are assessed on test data from the Silverbox benchmark in two ways: model approximation error and decoupling prediction error. 
\\\\
\textbf{Model approximation error}: How well does the output of the decoupling match the output of the reference PNARX model in each point of the test set. This corresponds to evaluating the decoupling in a regression context of approximating the function represented by the PNARX model.
\\\\
\textbf{Decoupling prediction error}: How does the decoupling perform when comparing the predicted outputs of the decoupling, to the Silverbox benchmark test data. This corresponds to evaluating the decoupling in a system identification model prediction context where there is direct comparison with the test data, not the PNARX model output (as is the case for the model approximation).
\\\\
Both the model approximation and decoupling prediction error are represented by the relative root-mean squared error (RRMSE).

Figure \ref{fig:silverbox_results} shows the results for the Silverbox experiment with 30 executions of the respective algorithms per data point. In addition, for each run a new Jacobian tensor $\mathcal{J}$ and zeroth-order information matrix $\mathbf{F}$ are constructed using $S=200$ sample points and for the \textit{Zniyed} algorithm $\lambda$ is set to $1e-4$. For the outer stopping criterion of the \textit{TS-} algorithms, the model approximation error on the validation set is used which is the full training set with the samples used for construction of $\mathcal{J}$ and $\mathbf{F}$ omitted.

The results per algorithm are divided columnwise where the model approximation and simulation error for \textit{Zniyed}, \textit{TS-Zniyed} and \textit{TS-PROJ-CMPT-$1$} are shown in the left column (considering that the TS-PROJ-CMPT-$L$ algorithm can be seen as a generalization of the \textit{Zniyed} CMTF algorithm \cite{zniyed2021tensor}) and the results for the \textit{Hollander} and \textit{TS-CONSTR-CMPT-$1$} algorithms are shown in the right column (since the CONSTR-CMPT-$L$ algorithm can be seen as a generalization of the polynomial constraint algorithm of Hollander \cite{hollander2017multivariate}). 
\begin{figure}[ht!]
    \centering
    \subfloat{
        \includegraphics[width=0.45\textwidth]{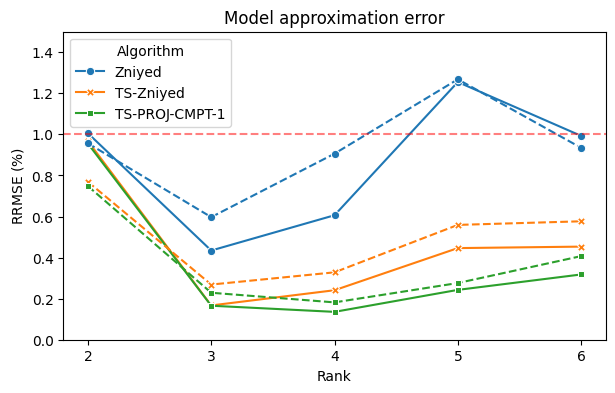}}
    \hfill
    \subfloat{
        \includegraphics[width=0.45\textwidth]{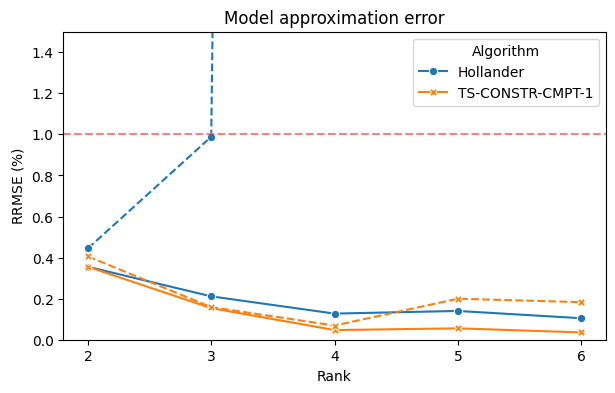}}
        \hfill
    \subfloat{
        \includegraphics[width=0.45\textwidth]{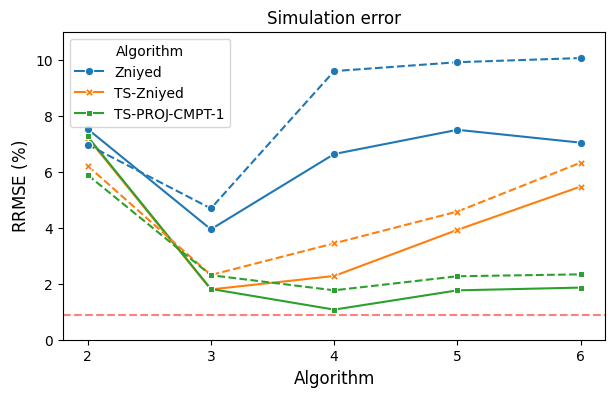}}
    \hfill
    \subfloat{
        \includegraphics[width=0.45\textwidth]{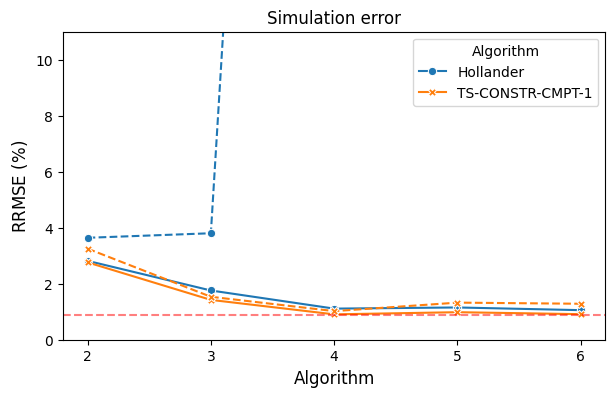}}
    \caption{Median (solid) and mean (dotted) values over $30$ runs for the model approximation (first row) and simulation error (second row) for the decouplings computed of the reference PNARX model on the Silverbox benchmark using the \textit{Zniyed}, \textit{TS-Zniyed} and \textit{TS-PROJ-CMPT-1} algorithms (first column) and the \textit{Hollander} and \textit{TS-CONSTR-CMPT-1} algorithms (second column). Decouplings are computed for ranks $2$ up until $6$. For the model approximation error, the red dotted line denotes an error of $1\%$ and for the simulation error, the red dotted line denotes the performance of the original model of $0.88\%$. Overall, the \textit{TS-PROJ-...} and \textit{TS-CONSTR-...} methods yield the best model approximation and simulation error for their respective comparisons, as well as the most stable results.}
    \label{fig:silverbox_results}
\end{figure}

The left column of Figure~\ref{fig:silverbox_results} shows that the two stage algorithms \textit{TS-Zniyed} and \textit{TS-PROJ-CMPT-$1$} give better and more stable results compared to the standard \textit{Zniyed} results, i.e., lower median values and smaller distance between the mean and median data points. In addition, the \textit{TS-PROJ-CMPT-$1$} algorithm seems to be overall best performing with the interesting observation that the RRMSE remains quite stable for ranks $3$ through $6$, while for the other algorithms the RRMSE starts to noticeably increase as the rank increases from $2$ through $6$. For the right column of Figure~\ref{fig:silverbox_results} the two stage algorithms only slightly improve results compared to the standard \textit{Hollander} algorithm. However, due to the incorporation of the zeroth-order information matrix $\mathbf{F}$, the \textit{TS-CONSTR-CMPT-$1$} algorithm is much more stable with the mean and median values almost coinciding while for the \textit{Hollander} algorithm the gap between the mean and median is much larger due to possible faulty bias corrections. Thus, the \textit{TS-PROJ-CMPT-$1$} and \textit{TS-CONSTR-CMPT-$1$} algorithms introduced in this work provide the best results in this comparison.

Comparing the results of both columns it is quite clear that for this example, the algorithms in the right column are the best fit for this problem (lower and consistent median values) with \textit{TS-CONSTR-CMPT-1} providing the best results. Furthermore, the results in Figure~\ref{fig:silverbox_results} are comparable with those achieved by the non-parametric decoupling algorithm of Decuyper et al.~\cite{decuyper2022decoupling}. 
Do note however, that 1) the algorithms introduced in this work require only a value for the ranks, degree and $\beta$ and $\lambda$ parameter to be set, while for the algorithm of Decuyper~\cite{decuyper2022decoupling} additional terms and filters may need to be added/removed from the objective function to achieve satisfactory results, which is more cumbersome and 2) our results are statistics over $30$ runs per data point, while the behavior and stability over multiple runs are unclear in the results of Decuyper~\cite{decuyper2022decoupling}.

\subsection{Neural network case one: MNIST}

 \noindent In this section, a fully connected neural network, trained on the MNIST dataset~\cite{deng2012mnist}
 is used. The network has an input layer of size $784$, three hidden layers of size $80$, $60$ and $40$ with ReLU activation functions and an output layer of size $10$ with a Sigmoid activation. The network has an accuracy of $95.61 \%$ on the test set.
 
For computational reasons, the \textit{Hollander} and \textit{CONSTR} type algorithms will not be used for the neural network experiments and the focus will be on \textit{Zniyed} and \textit{PROJ} type algorithms.

The decoupling method can be used to replace a part of this pre-trained network with a decoupled representation, e.g., for reducing model size. 
Figure~\ref{fig:MNIST_net_compr} shows the architecture and part of the network that will be compressed. The subnet to be compressed corresponds to a multivariate function with $80$ inputs and $10$ outputs. Additionally, the \textit{savings percentage} (SP) and \textit{accuracy drop} (AD) metrics, defined as
\begin{gather}
    \text{SP} = \left(1 - \dfrac{\# \text{Parameters in decoupled representation}}{\# \text{Parameters in original system}}\right) \times 100, \nonumber \\
    \text{AD} = acc_{B} - acc_{A}, \nonumber
\end{gather}
are used for analyzing the model size reduction. Here, $acc_{B}$ and $acc_{A}$ denote the accuracy of the network on the test set before and after the model size reduction respectively.  The SP denotes the percentage of parameters that we save (or leave out) by using the decoupled representation, i.e., higher SP equals more compression.

For each execution of the algorithms, a new Jacobian tensor $\mathcal{J}$ with $S=200$ sampling points is used. The sampling points are distributed equally over the different output classes, so given that there are $10$ output classes, this means $20$ samples per MNIST class. In addition, the internal functions of the decoupling are polynomials of degree $4$. For the $stop\_metric$, a validation set is constructed by removing the $200$ sample points of the Jacobian tensor from the training set. The $stop\_metric$ then corresponds to the accuracy on this validation set. As a result, the stopping criterion in Algorithm~\ref{alg:practical_use_algorithm} can be seen as a form of early stopping based on the validation set. Finally, each algorithm is executed for a minimum of $15$, maximum $50$ iterations with early stoppage if there is no decrease in the objective for $20$ consecutive iterations and the \textit{Zniyed} algorithm is executed with the same configuration as described by Zniyed \cite{zniyed2021tensor}, i.e., $\lambda$ is initially set to $\lambda = 1e-6$ and multiplied by $10$ every $10$ iterations.
\begin{figure}[ht]
    \centering
    \includegraphics[width=0.33\textwidth]{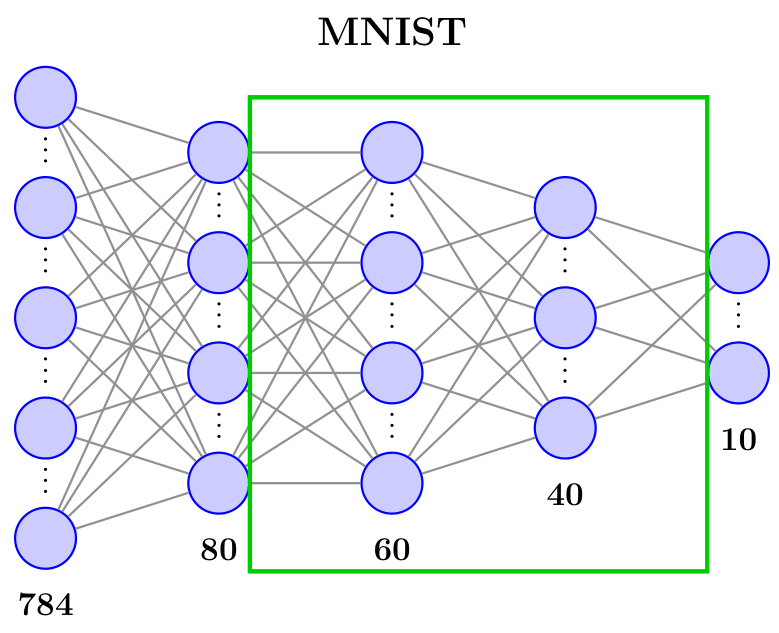}
    \caption{Architecture of MNIST network. The section in the green box is compressed.}
    \label{fig:MNIST_net_compr}
\end{figure}

\begin{figure}
    \centering
    \includegraphics[width= 0.95\textwidth]{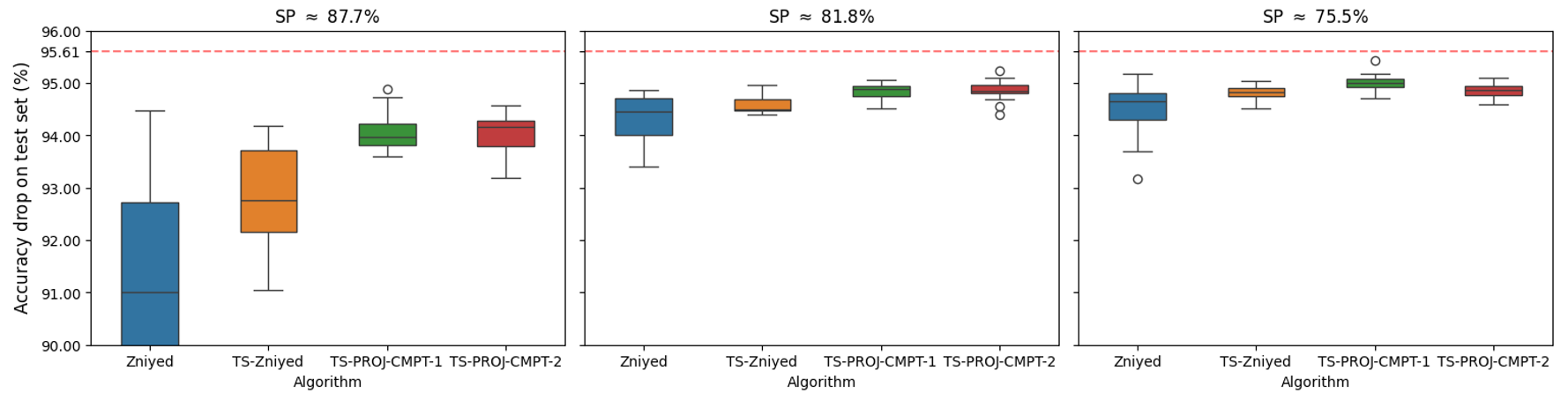}
    \caption{Accuracies on the MNIST test set after compressing the original network by replacing the final two hidden layers with a decoupled representation using algorithms mentioned on the x-axis. For \textit{Zniyed}, \textit{TS-Zniyed} and \textit{TS-PROJ-CMPT-$1$}, $r=10,15,20$ from left to right and for \textit{TS-PROJ-CMPT-$2$} we have $(r_2, r_1) = (12, 8), (11, 13)$ and $(11,18)$ from left to right. The title above each plot shows the corresponding savings percentage of the computed decouplings and the internal functions are polynomials of degree four. Each boxplot consists of $15$ results. The red dotted lines denote the accuracy on the test set of the original network. The stability and performance of the \textit{Zniyed} executions improves as the SP decreases, but the \textit{TS-PROJ-...} executions provide the smallest and most stable accuracy drops.}
    \label{fig:zniyed_comparison}
\end{figure}
%Figure \ref{fig:decoupling_vs_standard} indicates that for this example, the Adam algorithm requires seven times more data ($1400$ training points) than the tensor-based method ($S=200$) to achieve better results.

Figure~\ref{fig:zniyed_comparison} shows a comparison of $15$ single-layer decouplings computed with the algorithms \textit{Zniyed}, \textit{TS-Zniyed}, \textit{TS-PROJ-CMTPT-$1$} (single-layer) and \textit{TS-PROJ-CMPT-$2$} (two-layer), for three different CR values that are shown in the title of the plots. Figure \ref{fig:zniyed_comparison} shows that compared with the results of the Zniyed and TS-Zniyed algorithms, the newly introduced TS-PROJ-... algorithms provide more stable and better results, both with single and two-layer decouplings. Furthermore, the results show that the accuracy results increase for the single and two-layer decouplings when the savings percentage drops from $\text{SP} \approx 87.7\%$ to $\text{SP} \approx 81.8\%$ and $\text{SP} \approx 75.5\%$, yielding AD's lower than $0.6\%$ for $\text{SP} \approx 81.8\%$ and $\text{SP}  \approx 75.5\%$.

\subsection{Neural network case two: FashionMNIST}
\noindent Since MNIST in general is not a hard problem, this section looks at a fully connected neural network trained on a more difficult dataset, namely FashionMNIST \cite{xiao2017fashion}. The network has an input layer of size $784$, four hidden layers of sizes $512$, $256$, $128$, and $64$ with ReLU activation functions, and an output layer of size $10$ with a logarithmic softmax activation. The network has an accuracy of $89.14 \%$ on the test set. Again, part of the pre-trained network will be replaced with a decoupled representation to reduce the model size. Figure \ref{fig:FashionMNIST_net_compr} shows the architecture of the network and the part that will be replaced. Here, the subnet to be compressed is a multivariate function with $512$ inputs and $10$ outputs.

Similarly to the MNIST example, a new Jacobian tensor is constructed for each execution of the respective algorithms and the internal functions are polynomials of degree $4$. The configuration of algorithm executions as well as the construction of the validation set and $stop\_metric$ are analogous to the MNIST example.

\begin{figure}[ht]
    \centering
    \includegraphics[width=0.4\textwidth]{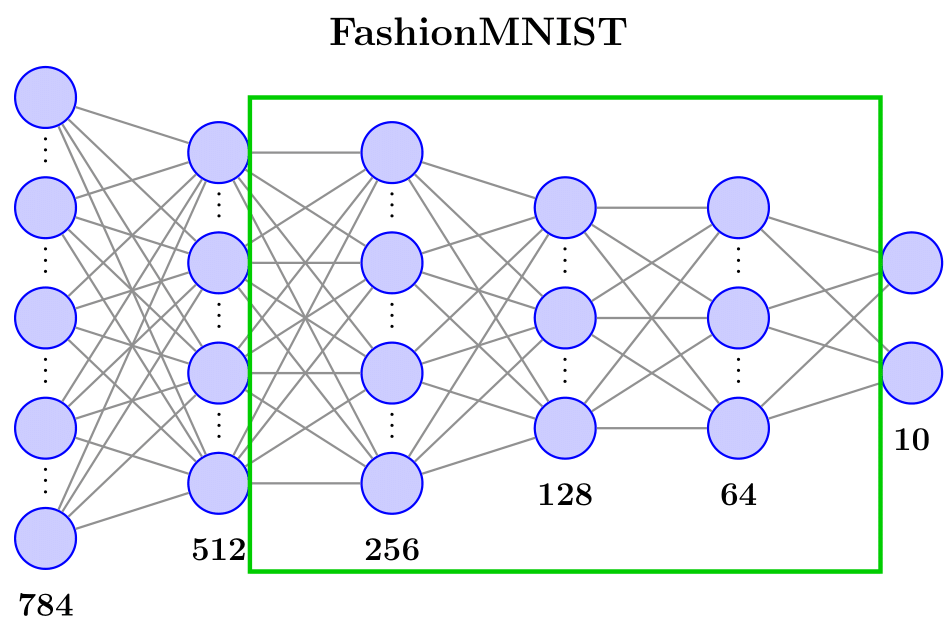}
    \caption{Architecture of FashionMNIST network. The section in the green box is compressed.}
    \label{fig:FashionMNIST_net_compr}
\end{figure}

Figure \ref{fig:FashionMNIST_Acc_full} shows compression results for algorithms \textit{Zniyed}, \textit{TS-Zniyed}, \textit{TS-PROJ-CMPT-$1$} and \textit{TS-PROJ-CMPT-$2$} with decreasing savings percentage from left to right. This figure shows the results for $5$ executions of the algorithms with the different configurations and $S=200$ sample points.

\begin{figure}[ht]
    \centering
    \includegraphics[width=0.75\textwidth]{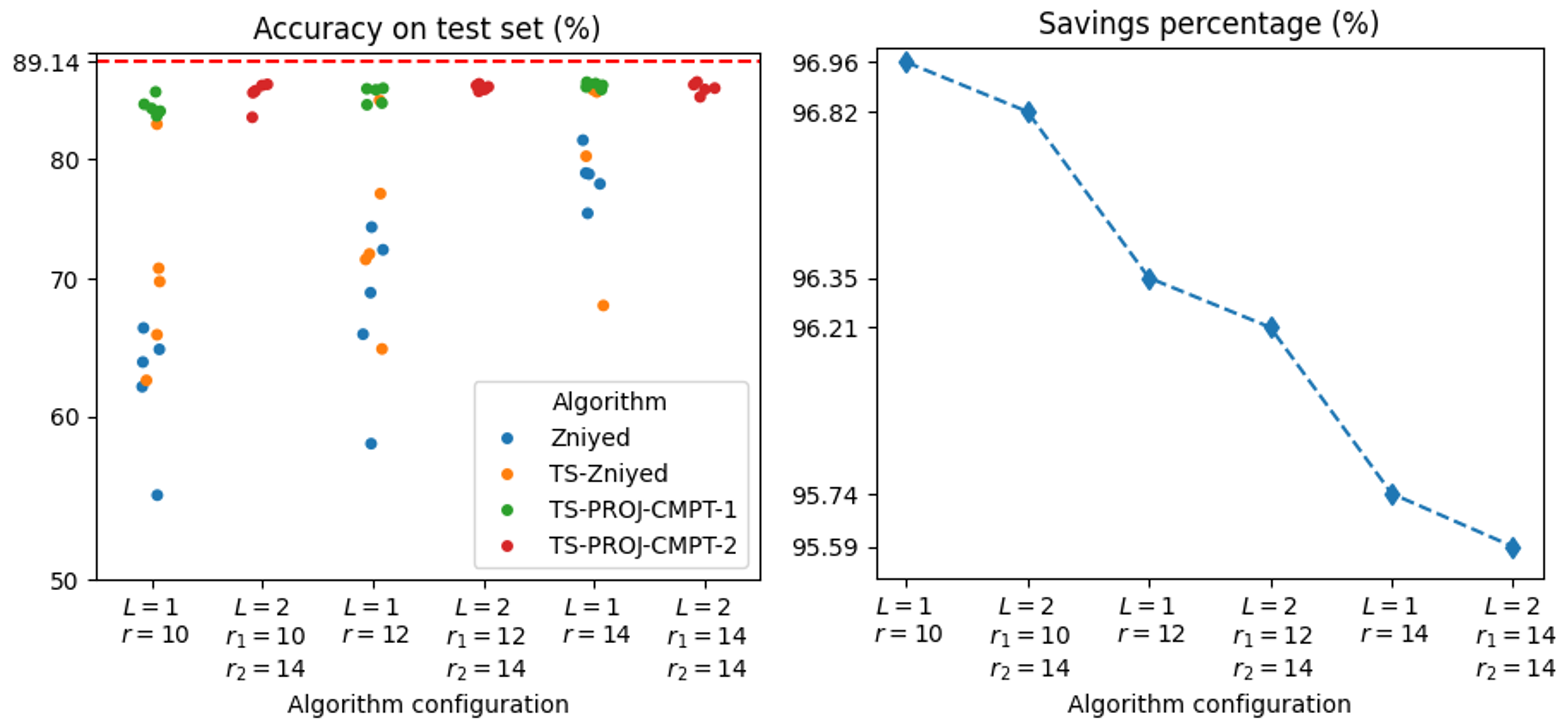}
    \caption{ (Left) Accuracy results on the test set for the networks compressed using the
configurations mentioned on the x-axis and algorithms \textit{Zniyed}, \textit{TS-Zniyed}, \textit{TS-PROJ-CMPT-$1$} and \textit{TS-PROJ-CMPT-$2$}. The red dotted line indicates the accuracy of the
original network on the test set. (Right) The corresponding SP values of the different
configurations used for the decoupling. For the used algorithm configurations, the Zniyed executions yield very large and unstable accuracy drops. Compared to \textit{Zniyed}, the results of the \textit{TS-PROJ-...} executions show smaller and more stable accuracy drops with a less noticeable increase in accuracy drop as the SP increases.
}
    \label{fig:FashionMNIST_Acc_full}
\end{figure}

The results in Figure~\ref{fig:FashionMNIST_Acc_full} show strong and stable compression results for the \textit{TS-PROJ-...} algorithms compared to the \textit{Zniyed} variants which are unstable and erratic. Compression of the network is possible using the decoupling algorithms where only $\approx 3.8\%$ of the original amount of parameters in the sub-network are kept, while having an AD of less than $2.5\%$ on the test set. 
Given the SP, this is quite substantial using only a limited amount of data.

\begin{figure}[ht]
    \centering
    \includegraphics[width=0.42\textwidth]{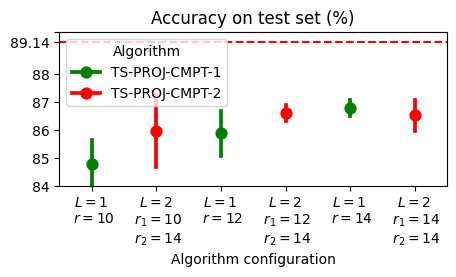}
    \caption{Accuracy results on the test set for the networks compressed using the
configurations mentioned on the x-axis and \textit{TS-PROJ-CMPT-$1$} and \textit{TS-PROJ-CMPT-$2$} algorithms, together with bars indicating $\pm$ one standard deviation. The red dotted line indicates the accuracy of the original network on the test set. For the higher SP values (left side) it can be seen that the two-layer decoupling, computed by the \textit{TS-PROJ-CMPT-2} algorithm yields a better accuracy drop compared to the $L=1, r=10$, single-layer representation (that has a similar SP value) and achieves comparable accuracy drops to the $L=1, r=12$, single-layer configuration while having a $\approx 0.5\%$ higher SP value.}
    \label{fig:FashionMNIST_Condensed}
\end{figure}

Furthermore, Figure~\ref{fig:FashionMNIST_Condensed} focuses on the results of the \textit{TS-PROJ-...} algorithms, showing that for this experiment, the used two-layer decoupling configurations are able to reach better and more stable accuracies on the test set, compared to some single-layer configurations, while only marginally decreasing the SP. Compare for example the two leftmost configurations in the figure. This indicates that a two-layer decoupling can be advantageous when considering very high SP values and when adding a second layer in the decoupled model yields a slight decrease in SP, but improves the flexibility of the model.

%\section{Conclusions}
%\textcolor{red!80!black}{Todo}
\section{Conclusions}

\noindent This work extends the tensor-based framework for computing single-layer decouplings by 1) introducing a generalized theory for the concept of multi-layer decoupling, 2) providing a theoretical basis for choosing analytic and scale-invariant basis functions, 3) deriving a tensor-based solution strategy based on \cite{dreesen2015decoupling}, introducing the use of the ParaTuck-$L$ decomposition, 4) describing the PROJ-CMPT-$L$ and CONSTR-CMPT-$L$ algorithms for computing multi-layer decouplings, and 5) introducing a two-stage stopping criterion for controlling the non-monotonicity of the algorithms and taking into account the problem context. 

As illustrated by both synthetic and FashionMNIST examples, the two-layer decoupling may result in a better approximation than its single-layer alternative.
Furthermore, for the Silverbox example, the CONSTR-CMPT-$L$ algorithm proved to be a better choice than the PROJ-CMPT-$L$ algorithm while for the neural network case studies (with larger tensors to decompose), the PROJ-CMPT-$L$ algorithm is preferred for computational reasons.

For the neural network case studies, applying PROJ-CMPT-$L$ for compressing an MNIST (simple) and FashionMNIST (harder) network, led to the following two conclusions. The PROJ-CMPT-$L$ algorithm, executed with the bilevel optimization strategy, yields more stable results than the CMTF algorithm proposed by \cite{zniyed2021tensor}, even when it is analogously executed with bilevel optimization. Lastly, good compression results are achieved on the MNIST and FashionMNIST networks, with the FashionMNIST case in particular yielding SP values of $\approx96.2\%$ with less than a $2.5\%$ drop in accuracy on the test set. 

In summary, our results provide a theoretically supported extension of the single-layer decoupling problem to the multi-layer case with initial algorithms applied to different problem contexts (system identification, neural networks), showcasing wide applicability of the developed methods and theory.
Future work includes formulating a strategy for choosing the number of neurons in multi-layer decouplings, studying additional basis functions for the internal functions, improving on the developed algorithms to better incorporate large amounts of data or structures with more than two layers together with analyzing 
their potential advantages over the single- and two-layer case.

\section{Acknowledgements}
\noindent
This work was supported by the Fonds Wetenschappelijk Onderzoek (FWO) fundamental research fellowship 11A2H25N and partly supported by the funding from French government, managed by the National Research Agency (ANR), under the France 2030 program, reference ANR-25-PEIA-0003.

\bibliographystyle{elsarticle-num}
\bibliography{refs}

@article{dreesen2015decoupling,
  title={Decoupling multivariate polynomials using first-order information and tensor decompositions},
  author={Dreesen, Philippe and Ishteva, Mariya and Schoukens, Johan},
  journal_={SIAM Journal on Matrix Analysis and Applications},
  journal={SIAM J Matrix Analysis and Applications},
  volume={36},
  number_={2},
  pages={864--879},
  year={2015},
  publisher={SIAM}
}

@inproceedings{dreesen2015decoupling_wiener,
  title={Decoupling static nonlinearities in a parallel {W}iener-{H}ammerstein system: A first-order approach},
  author={Dreesen, Philippe and Schoukens, Maarten and Tiels, Koen and Schoukens, Johan},
  booktitle={IEEE International Instrumentation and Measurement Technology Conference (I2MTC) Proceedings},
  pages={987--992},
  year={2015},
  organization={IEEE}
}

@inproceedings{dreesen2016decoupling,
  title={Decoupling nonlinear state-space models: case studies},
  author={Dreesen, Philippe and Esfahani, A Fakhrizadeh and Stoev, Julian and Tiels, Koen and Schoukens, Johan},
  booktitle={Proceedings of the International Conference on Noise and Vibration Engineering (ISMA)},
  pages={2639--2646},
  year={2016}
}

@inproceedings{noel2016hysteretic,
  title={Hysteretic benchmark with a dynamic nonlinearity},
  author={Noel, Jean-Philippe and Schoukens, Maarten},
  booktitle={Workshop on nonlinear system identification benchmarks},
  pages={7--14},
  year={2016}
}

@inproceedings{wigren2013three,
  title={Three free data sets for development and benchmarking in nonlinear system identification},
  author={Wigren, Torbj{\"o}rn and Schoukens, Johan},
  booktitle={European control conference (ECC)},
  pages={2933--2938},
  year={2013},
  organization={IEEE}
}

@article{decuyper2019decoupling,
  title={Decoupling multivariate polynomials for nonlinear state-space models},
  author={Decuyper, Jan and Dreesen, Philippe and Schoukens, Johan and Runacres, Mark C and Tiels, Koen},
  journal={IEEE Control Systems Letters},
  volume={3},
  number={3},
  pages={745--750},
  year={2019},
  publisher={IEEE}
}

@article{decuyper2021decoupling,
  title={Decoupling {P-NARX} models using filtered {CPD}},
  author={Decuyper, Jan and Westwick, David and Karami, Kiana and Schoukens, Johan},
  journal={IFAC-PapersOnLine},
  volume={54},
  number={7},
  pages={661--666},
  year={2021},
  publisher={Elsevier}
}

@article{dreesen2021parameter,
  title={Parameter estimation of parallel Wiener-Hammerstein systems by decoupling their Volterra representations},
  author={Dreesen, Philippe and Ishteva, Mariya},
  journal={IFAC-PapersOnLine},
  volume={54},
  number={7},
  pages={457--462},
  year={2021},
  publisher={Elsevier}
}

@article{decuyper2022decoupling,
  title={Decoupling multivariate functions using a nonparametric filtered tensor decomposition},
  author={Decuyper, Jan and Tiels, Koen and Weiland, Siep and Runacres, Mark C and Schoukens, Johan},
  journal={Mechanical Systems and Signal Processing},
  volume={179},
  pages={109328},
  year={2022},
  publisher={Elsevier}
}

@inproceedings{hollander2016parallel,
  title={Parallel Wiener-Hammerstein Identification: A case study},
  author={Hollander, Gabriel and Dreesen, Philippe and Ishteva, Mariya and Schoukens, Johan},
  booktitle={ISMA2016 International Conference on Noise and Vibration Engineering and USD2016 International Conference on Uncertainty on Structural Dynamics},
  pages={2647--2656},
  year={2016},
  organization={KU~Leuven}
}

@article{karami2021applying,
  title={Applying polynomial decoupling methods to the polynomial {NARX} model},
  author={Karami, Kiana and Westwick, David and Schoukens, Johan},
  journal={Mechanical Systems and Signal Processing},
  volume={148},
  pages={107134},
  year={2021},
  publisher={Elsevier}
}

@inproceedings{zniyed2021tensor,
  title={A tensor-based approach for training flexible neural networks},
  author={Zniyed, Yassine and Usevich, Konstantin and Miron, Sebastian and Brie, David},
  booktitle={55th Asilomar Conference on Signals, Systems, and Computers},
  pages={1673--1677},
  year={2021},
  organization={IEEE}
}

@article{apicella2021survey,
  title={A survey on modern trainable activation functions},
  author={Apicella, Andrea and Donnarumma, Francesco and Isgr{\`o}, Francesco and Prevete, Roberto},
  journal={Neural Networks},
  volume={138},
  pages={14--32},
  year={2021},
  publisher={Elsevier}
}

@inproceedings{de2023compressing,
  title={Compressing Neural Networks with Two-Layer Decoupling},
  author={De Jonghe, Joppe and Usevich, Konstantin and Dreesen, Philippe and Ishteva, Mariya},
  booktitle={IEEE 9th International Workshop on Computational Advances in Multi-Sensor Adaptive Processing (CAMSAP)},
  pages={226--230},
  year={2023},
  organization={IEEE}
}

@phdthesis{hollander2017multivariate,
  title={Multivariate polynomial decoupling in nonlinear system identification},
  author={Hollander, Gabriel},
  year={2017},
  school={Vrije Universiteit Brussel (VUB)},
}

@book{liu2021tensors,
  title={Tensors for Data Processing: Theory, Methods, and Applications},
  author={Liu, Yipeng},
  year={2021},
  publisher={Academic Press}
}

@article{kolda2009tensor,
  title={Tensor decompositions and applications},
  author={Kolda, Tamara G and Bader, Brett W},
  journal={SIAM Review},
  volume={51},
  number={3},
  pages={455--500},
  year={2009},
  publisher={SIAM}
}

@article{harshman1996uniqueness,
  title={Uniqueness proof for a family of models sharing features of {T}ucker's three-mode factor analysis and PARAFAC/CANDECOMP},
  author={Harshman, Richard A and Lundy, Margaret E},
  journal={Psychometrika},
  volume={61},
  pages={133--154},
  year={1996},
  publisher={Springer}
}

@inproceedings{usevich2023tensor,
  title={Tensor-based two-layer decoupling of multivariate polynomial maps},
  author={Usevich, Konstantin and Zniyed, Yassine and Ishteva, Mariya and Dreesen, Philippe and de Almeida, Andr{\'e} LF},
  booktitle={2023 31st European Signal Processing Conference (EUSIPCO)},
  pages={655--659},
  year={2023},
  organization={IEEE}
}

@inproceedings{dreesen2018decoupling,
  title={Decoupling multivariate functions using second-order information and tensors},
  author={Dreesen, Philippe and De Geeter, Jeroen and Ishteva, Mariya},
  booktitle_={Latent Variable Analysis and Signal Separation: 14th International Conference, LVA/ICA, Guildford, UK, July 2--5, 2018, Proceedings 14},
  booktitle={Latent Variable Analysis and Signal Separation: 14th International Conference, LVA/ICA, Guildford, UK, July 2--5, 2018},
  pages={79--88},
  year={2018},
  organization={Springer}
}

@article{zniyed2021learning,
  title={Learning nonlinearities in the decoupling problem with structured {CPD}},
  author={Zniyed, Yassine and Usevich, Konstantin and Miron, Sebastian and Brie, David},
  journal={IFAC-PapersOnLine},
  volume={54},
  number={7},
  pages={685--690},
  year={2021},
  publisher={Elsevier}
}

@book{giri2010block,
  title = {Block-Oriented Nonlinear System Identification},
  author={Giri, Fouad and Bai, Er-Wei},
  series = {Lecture Notes in Control and Information Sciences},
  volume={404},
  year={2010},
  publisher={Springer}
}

@article{deng2012mnist,
  title={The {MNIST} database of handwritten digit images for machine learning research},
  author={Deng, Li},
  journal={IEEE Signal Processing Magazine},
  volume={29},
  number={6},
  pages={141--142},
  year={2012},
  publisher={IEEE}
}

@Article{shi2016dpn,
  author   = {Jun Shi and Shichong Zhou and Xiao Liu and Qi Zhang and Minhua Lu and Tianfu Wang},
  title    = {Stacked deep polynomial network based representation learning for tumor classification with small ultrasound image dataset},
  journal  = {Neurocomputing},
  year     = {2016},
  volume   = {194},
  pages    = {87-94},
  issn     = {0925-2312},
  }

@article{schoukens2019nonlinear,
  title={Nonlinear system identification: A user-oriented road map},
  author={Schoukens, Johan and Ljung, Lennart},
  journal={IEEE Control Systems Magazine},
  volume={39},
  number={6},
  pages={28--99},
  year={2019},
  publisher={IEEE}
}

@article{janzamin2015beating,
  title={Beating the perils of non-convexity: Guaranteed training of neural networks using tensor methods},
  author={Janzamin, Majid and Sedghi, Hanie and Anandkumar, Anima},
  journal={arXiv preprint arXiv:1506.08473},
  year={2015}
}

@article{comon2006blind,
  title={Blind identification of under-determined mixtures based on the characteristic function},
  author={Comon, Pierre and Rajih, Myriam},
  journal={Signal Processing},
  volume={86},
  number={9},
  pages={2271--2281},
  year={2006},
  publisher={Elsevier}
}

@article{fornasier2021robust,
  title={Robust and resource efficient identification of shallow neural networks by fewest samples.},
  author={Fornasier, Massimo and Vyb{\'\i}ral, Jan and Daubechies, Ingrid},
  journal={Information \& Inference: A Journal of the IMA},
  volume={10},
  number={2},
  year={2021}
}

@Article{comon2017identifiability,
  author  = {Comon, Pierre and Qi, Yang and Usevich, Konstantin},
  title   = {Identifiability of an {X}-Rank Decomposition of Polynomial Maps},
  journal = {SIAM Journal on Applied Algebra and Geometry},
  year    = {2017},
  volume  = {1},
  number  = {1},
  pages   = {388-414}
}

@article{xiao2017fashion,
  title={Fashion-{MNIST}: a novel image dataset for benchmarking machine learning algorithms},
  author={Xiao, Han and Rasul, Kashif and Vollgraf, Roland},
  journal={arXiv preprint arXiv:1708.07747},
  year={2017}
}

@article{boulle2020rational,
  title={Rational neural networks},
  author={Boull{\'e}, Nicolas and Nakatsukasa, Yuji and Townsend, Alex},
  journal={Advances in Neural Information Processing Systems},
  volume={33},
  pages={14243--14253},
  year={2020}
}

@inproceedings{telgarsky2017neural,
  title={Neural networks and rational functions},
  author={Telgarsky, Matus},
  booktitle={International Conference on Machine Learning},
  pages={3387--3393},
  year={2017},
  organization={PMLR}
}

@inproceedings{ yang2024kolmogorov,
  title={{Kolmogorov}-{Arnold} Transformer},
  author={Yang, Xingyi and Wang, Xinchao},
  booktitle={The Thirteenth International Conference on Learning Representations (ICLR)},
  year={2025},
  note={arXiv preprint arXiv:2409.10594},
}

@article{unser2019representer,
  title={A representer theorem for deep neural networks},
  author={Unser, Michael},
  journal={Journal of Machine Learning Research},
  volume={20},
  number={110},
  pages={1--30},
  year={2019}
}

@inproceedings{balestriero2018spline,
  title={A spline theory of deep learning},
  author={Balestriero, Randall and Baraniuk, Richard G.},
  booktitle={International Conference on Machine Learning},
  pages={374--383},
  year={2018},
  organization={PMLR}
}

@article{parhi2021banach,
  title={Banach space representer theorems for neural networks and ridge splines},
  author={Parhi, Rahul and Nowak, Robert D},
  journal={Journal of Machine Learning Research},
  volume={22},
  number={43},
  pages={1--40},
  year={2021}
}

@article{bohra2020learning,
  title={Learning activation functions in deep (spline) neural networks},
  author={Bohra, Pakshal and Campos, Joaquim and Gupta, Harshit and Aziznejad, Shayan and Unser, Michael},
  journal={IEEE Open Journal of Signal Processing},
  volume={1},
  pages={295--309},
  year={2020},
  publisher={IEEE}
}

@inproceedings{molinapade,
  title={Pad{\'e} Activation Units: End-to-end Learning of Flexible Activation Functions in Deep Networks},
  author={Molina, Alejandro and Schramowski, Patrick and Kersting, Kristian},
  booktitle={International Conference on Learning Representations},
  year = {2000}
}

@article{sitzmann2020implicit,
  title={Implicit neural representations with periodic activation functions},
  author={Sitzmann, Vincent and Martel, Julien and Bergman, Alexander and Lindell, David and Wetzstein, Gordon},
  journal={Advances in Neural Information Processing Systems},
  volume={33},
  pages={7462--7473},
  year={2020}
}

@book{lorentz2012bernstein,
  title={Bernstein polynomials},
  author={Lorentz, George G},
  year={2012},
  publisher={American Mathematical Soc.}
}

@book{iarrobino1999power,
  title={Power sums, Gorenstein algebras, and determinantal loci},
  author={Iarrobino, Anthony and Kanev, Vassil},
  year={1999},
  publisher={Springer Science \& Business Media}
}

@book{landsberg2011tensors,
  title={Tensors: geometry and applications: geometry and applications},
  author={Landsberg, Joseph M},
  volume={128},
  year={2011},
  publisher={American Mathematical Society}
}

@inproceedings{dreesen2018lvaica,
  author = {P. Dreesen and J. {De Geeter} and M. Ishteva},
  title = {Decoupling multivariate functions using second-order information and tensors},
  booktitle = {Proc. 14th International Conference on Latent Variable Analysis and Signal Separation (LVA/ICA 2018)},
  series = {Lecture Notes on Computer Science (LNCS)},
  volume = {10891},
  editor = {Y. Deville and S. Gannot and R. Mason and M.~D. Plumbley and D. Ward},
  pages = {79--88},
  address = {Guildford, UK},
  year = {2018},
  publisher = {Springer}
}

@article{ delathauwer2006linkCPDsimul,
author = {{De Lathauwer}, Lieven},
title = {A Link between the Canonical Decomposition in Multilinear Algebra and Simultaneous Matrix Diagonalization},
journal = {SIAM Journal on Matrix Analysis and Applications},
volume = {28},
number = {3},
pages = {642-666},
year = {2006},
}

@article{de2019paratuck,
  title={{PARATUCK} semi-blind receivers for relaying multi-hop MIMO systems},
  author={de Oliveira, Pedro Marinho R and Fernandes, C Alexandre Rolim and Favier, G{\'e}rard and Boyer, R{\'e}my},
  journal={Digital Signal Processing},
  volume={92},
  pages={127--138},
  year={2019},
  publisher={Elsevier}
}

\newpage

\appendix

\section{Proof of theorem \ref{theo:bias_theorem}}\label{app:proof_theorem}
\begin{proof}
    Given an exact decoupling of $\mathbf{f}$ as in equation \eqref{eq:k-layer-decoupling} with a parameterized representation of the internal functions as outlined in Section \ref{sec:param_functions}, %\eqref{eq:decoupled_analytic}, 
    an equivalent decoupling of the form of equation \eqref{eq:decoupled_analytic_no_constant} can be constructed by induction on the layers. Namely, given the decoupling from equation \eqref{eq:k-layer-decoupling}, %\eqref{eq:decoupled_analytic}, 
    it holds that
    \begin{align}
        \mathbf{g}_1(\mathbf{W}_0\mathbf{x}) - \mathbf{b}_1 &= \hat{\mathbf{g}}_1(\mathbf{W}_0 \mathbf{x}) =: \mathbf{u}_1, \nonumber
    \end{align}
    where the constant terms of $\hat{\mathbf{g}}_1$ are equal to zero and 
    \begin{equation}
        \mathbf{b}_1 = \begin{bmatrix}
            c^{(1)}_{1,0} & c^{(2)}_{1,0} & \cdots & c^{(r_1)}_{1,0}
        \end{bmatrix}^\top. \nonumber
    \end{equation}
    Thus the result of the second layer, with $\hat{\mathbf{u}}_1 := \mathbf{W}_1 \; \mathbf{u}_1$ and $\hat{\mathbf{b}}_1 := \mathbf{W}_1 \; \mathbf{b}_1$, becomes
    \begin{align}
        \mathbf{g}_2(\mathbf{W}_1 \;(\mathbf{u}_1 + \mathbf{b}_1)) &= \mathbf{g}_2(\mathbf{W}_1 \; \mathbf{u}_1 + \mathbf{W}_1 \; \mathbf{b}_1) \nonumber \\
        &= \mathbf{g}_2(\hat{\mathbf{u}}_1 + \hat{\mathbf{b}}_1). \nonumber
    \end{align}
    As mentioned in Theorem \ref{theo:bias_theorem}, the input $\hat{\mathbf{u}}_1 + \hat{\mathbf{b}}_1$ belongs to an open set $\Omega_2$ such that each internal function $g^{(j)}_2$, for $j=1,2,\hdots,r_2$, is real analytic on $\Omega_2$. As a result, each internal function can be represented as a convergent Taylor series
    \begin{equation}
        \begin{split}
            g^{(j)}_2(\hat{u}_{1,j} + \hat{b}_{1,j}) -  g^{(j)}_2(\hat{b}_{1,j}) &= \sum^{\infty}_{n=1} \dfrac{g^{(j)}_2(\hat{b}_{1,j})^{(n)}}{n!} \cdot (\hat{u}_{1,j})^n \\
            &=: \hat{g}^{(j)}_2(\hat{u}_{1,j}), \label{eq:construction_g_2_hat}
        \end{split}
    \end{equation}
    where $g^{(j)}_2(\hat{b}_{1,j})$ is a constant and $\hat{g}^{(j)}_2(\hat{u}_{1,j})$ a function only dependent on $\hat{u}_{1,j}$. From equation \eqref{eq:construction_g_2_hat} it follows that
    \begin{equation}
        \begin{split}
            \hat{\mathbf{g}}_2(\hat{\mathbf{u}}_{1}) := \mathbf{g}_2(\hat{\mathbf{u}}_1 + \hat{\mathbf{b}}_1) - \mathbf{g}_2(\hat{\mathbf{b}}_{1}),
        \end{split}
    \end{equation}
    showing that the bias from the first layer can be removed 
    %(constant coefficients equal to zero) 
    and absorbed into the coefficients of the power series representation of the second layer. This constitutes the base case of the induction.
    
    For the induction step, by assuming that the bias of layer $\ell - 2$ can be moved to layer $\ell - 1$ and that $\mathbf{g}_{\ell-1}(\hat{\mathbf{u}}_{l-2} + \hat{\mathbf{b}}_{\ell-2})  = \mathbf{g}_{\ell-1}(\hat{\mathbf{b}}_{\ell-2}) + \hat{\mathbf{g}}_{\ell-1}(\hat{\mathbf{u}}_{\ell-2})$, an analogous reasoning as for the base step can be used to show that
    \begin{equation}
        \begin{split}
            \mathbf{g}_\ell(\hat{\mathbf{u}}_{\ell-1} + \hat{\mathbf{b}}_{\ell-1}) &= \mathbf{g}_{\ell}(\hat{\mathbf{b}}_{\ell-1}) + \hat{\mathbf{g}}_{\ell}(\hat{\mathbf{u}}_{\ell-1}),
        \end{split}
    \end{equation}
    for $\ell \leq L-1$ and $\mathbf{g}_L(\hat{\mathbf{u}}_{L-1} + \hat{\mathbf{b}}_{L-1}) = \mathbf{g}_L(\hat{\mathbf{b}}_{L-1}) + \hat{\mathbf{g}}_L(\hat{\mathbf{u}}_{L-1}) =: \bar{\mathbf{g}}_L(\hat{\mathbf{u}}_{L-1})$ Thus, it is possible to construct an exact decoupling of $\mathbf{f}$
    \begin{equation}
        \mathbf{f}(\mathbf{x}) = \mathbf{W}_L \; \bar{\mathbf{g}}_L(\mathbf{W}_{L-1} \; \hat{\mathbf{g}}_{L-1}(\hdots \hat{\mathbf{g}}_1(\mathbf{W}_0 \mathbf{x})\hdots)),
    \end{equation}
    where only the constant terms of the internal functions of $\bar{\mathbf{g}}_L$ may be non-zero. In addition, the construction of the internal functions, as done for $\hat{g}^{(j)}_2$ in equation \eqref{eq:construction_g_2_hat}, shows them to be convergent power series.
\end{proof}

\section{Structure of matrices in (PROJ-)CMPT-$L$ algorithm}\label{app:mat_proj}
\noindent
The structure of matrices $\mathbf{M}^{(0)}_{\mathbf{W}}$, $\mathbf{M}^{(\ell_1)}_{\mathbf{W}}$, $\mathbf{M}^{(L)}_{\mathbf{W}}$, $\mathbf{M}^{(\ell_2,s)}_{\mathbf{G}}$, $\mathbf{M}^{(L,s)}_{\mathbf{G}}$ and $\mathbf{M}^{(1,s)}_{\mathbf{G}}$, for $\ell_1=1,2,\hdots, L-1$, $\ell_2=2,3,\hdots,L-1$ and $s=1,2,\hdots,S$, are given by

\begin{equation}
    \begin{split}
        \mathbf{M}^{(0)}_{\mathbf{W}} &= \begin{bmatrix}
            \mathbf{W}_L \; \mathbf{D}^{(1)}_{L} \; \mathbf{W}_{L-1}\; \cdots \; \mathbf{W}_{1} \; \mathbf{D}^{(1)}_{1} \\
            \mathbf{W}_L \; \mathbf{D}^{(2)}_{L} \; \mathbf{W}_{L-1}\; \cdots \; \mathbf{W}_{1} \; \mathbf{D}^{(2)}_{1} \\
            \vdots \\
            \mathbf{W}_L \; \mathbf{D}^{(S)}_{L} \; \mathbf{W}_{L-1}\; \cdots \; \mathbf{W}_{1} \; \mathbf{D}^{(S)}_{1}
        \end{bmatrix}, \\
        \mathbf{M}^{(L)}_{\mathbf{W}} &= \begin{bmatrix}
            (\mathbf{D}^{(1)}_{L} \; \mathbf{W}_{L-1}\; \cdots \; \mathbf{W}_{1} \; \mathbf{D}^{(1)}_{1} \; \mathbf{W}_0)^\top \\
            (\mathbf{D}^{(2)}_{L} \; \mathbf{W}_{L-1}\; \cdots \; \mathbf{W}_{1} \; \mathbf{D}^{(2)}_{1} \; \mathbf{W}_0)^\top \\
            \vdots \\
            (\mathbf{D}^{(S)}_{L} \; \mathbf{W}_{L-1}\; \cdots \; \mathbf{W}_{1} \; \mathbf{D}^{(S)}_{1} \; \mathbf{W}_0)^\top
        \end{bmatrix}^\top, \\
        \mathbf{M}^{(\ell_1)}_{\mathbf{W}} &= \begin{bmatrix}
            (\mathbf{D}^{(1)}_{\ell_1} \; \mathbf{W}_{\ell_1-1} \cdots \mathbf{W}_{0})^\top \otimes (\mathbf{W}_L \; \mathbf{D}^{(1)}_{L} \cdots \; \mathbf{D}^{(1)}_{\ell_1+1}) \\
            (\mathbf{D}^{(2)}_{\ell_1} \; \mathbf{W}_{\ell_1-1} \cdots \mathbf{W}_{0})^\top \otimes (\mathbf{W}_L \; \mathbf{D}^{(2)}_{L} \cdots \; \mathbf{D}^{(2)}_{\ell_1+1}) \\
            \vdots \\
            (\mathbf{D}^{(S)}_{\ell_1} \; \mathbf{W}_{\ell_1-1} \cdots \mathbf{W}_{0})^\top \otimes (\mathbf{W}_L \; \mathbf{D}^{(S)}_{L} \cdots \; \mathbf{D}^{(S)}_{\ell_1+1})
        \end{bmatrix}, \\
        \mathbf{M}^{(\ell_2,s)}_{\mathbf{G}} &= \begin{bmatrix}
            (\mathbf{W}_{\ell_2-1} \; \mathbf{D}^{(s)}_{\ell_2-1} \; \cdots \; \mathbf{W}_{0})^\top \odot (\mathbf{W}_L \; \mathbf{D}^{(s)}_{L} \cdots \; \mathbf{W}_{\ell_2})
        \end{bmatrix}, \\
        \mathbf{M}^{(L,s)}_{\mathbf{G}} &= \begin{bmatrix}
            (\mathbf{W}_{L-1} \; \mathbf{D}^{(s)}_{L-1} \; \cdots \; \mathbf{W}_{0})^\top \odot \mathbf{W}_L
        \end{bmatrix}, \\
        \mathbf{M}^{(1,s)}_{\mathbf{G}} &= \begin{bmatrix}
            \mathbf{W}_{0}^\top \odot (\mathbf{W}_L \; \mathbf{D}^{(s)}_{L} \cdots \; \mathbf{W}_1)
        \end{bmatrix}, \nonumber
    \end{split}
\end{equation}
where 
\begin{align}
    &\mathbf{M}^{(0)}_{\mathbf{W}} \in \mathbb{R}^{nS \times r_1},\;\mathbf{M}^{(L)}_{\mathbf{W}} \in \mathbb{R}^{r_L \times mS}, \nonumber \\
    &\mathbf{M}^{(\ell_1)}_{\mathbf{W}} \in \mathbb{R}^{nmS \times r_{\ell_1} r_{\ell_1+1}}, \;\mathbf{M}^{(\ell_2,s)}_{\mathbf{D}} \in \mathbb{R}^{nm \times r_{\ell_2}}, \nonumber \\
    &\mathbf{M}^{(L,s)}_{\mathbf{G}} \in \mathbb{R}^{nm \times r_L},\;\mathbf{M}^{(1,s)}_{\mathbf{G}} \in \mathbb{R}^{nm \times r_1}. \nonumber
\end{align}

\section{Structure of matrices in CONSTR-PT-$L$ algorithm}\label{app:mat_constr}

\noindent In CONSTR-PT-$L$, the updates can be derived by noticing that
\begin{equation}
    \begin{split}
        \mathbf{D}^{(s)}_{\ell} = \begin{bmatrix}
            \boldsymbol{\phi}'_{\ell}(u^{(s)}_{\ell, 1}) &  & \mathbf{0} \\
             & \ddots &  \\
            \mathbf{0} &  & \boldsymbol{\phi}'_{\ell}(u^{(s)}_{{\ell, r_{\ell}}}) 
        \end{bmatrix}
        \cdot
        \begin{bmatrix}
            \mathbf{c}^{1}_{\ell} &  & \mathbf{0} \\
             & \ddots &  \\
            \mathbf{0} &  & \mathbf{c}^{r_{\ell}}_{\ell}
        \end{bmatrix} = \mathbf{X}^{(s)}_{\ell} \cdot \mathbf{C}_{\ell},
        \label{eq:incorp_struct_D}
    \end{split}
\end{equation}
where $\mathbf{X}^{(s)}_{\ell} \in \mathbb{R}^{r_{\ell} \times r_{\ell} (d_{\ell} + 1)}$, $\mathbf{C}_{\ell} \in \mathbb{R}^{r_{\ell}(d_{\ell} + 1) \times r_{\ell}}$ and
\begin{equation}
    \boldsymbol{\phi}'_{\ell}(u^{(s)}_{\ell, j}) = 
    \begin{bmatrix} 0 \\ \phi'_{\ell,1}(u^{(s)}_{\ell, j}) \\ \vdots \\ \phi'_{\ell,d_{\ell}}(u^{(s)}_{\ell, j})\end{bmatrix}^\top,        \; \mathbf{c}^j_{\ell} = \begin{bmatrix}
    c^{(j)}_{\ell,0} \\ \vdots \\ c^{(j)}_{\ell,d_{\ell}}
    \end{bmatrix}, \nonumber
\end{equation}
for $\boldsymbol{\phi}'_{\ell}(u^{(s)}_{\ell,j}) \in \mathbb{R}^{1 \times (d_{\ell} + 1)}$ and $\mathbf{c}^j_{\ell} \in \mathbb{R}^{(d_{\ell} + 1) \times 1}$.

Now, by incorporating this structure, the frontal slices of $\mathcal{J}$ become
\begin{align}
    \mathcal{J}_{:,:,s} &= \mathbf{W}_L \cdot \mathbf{D}^{(1)}_{L} \cdot \mathbf{W}_{L-1} \cdots \mathbf{D}^{(1)}_{1} \cdot \mathbf{W}_0 \nonumber \\
    &= \mathbf{W}_L \cdot (\mathbf{X}^{(s)}_{L} \cdot \mathbf{C}_L) \cdot \mathbf{W}_{L-1} \cdots (\mathbf{X}^{(s)}_{1} \cdot \mathbf{C}_1) \cdot \mathbf{W}_{0}, \label{eq:struct_J_constr_incorp}  
\end{align}
\begin{figure*}
    \centering
    \includegraphics[width=1\textwidth]{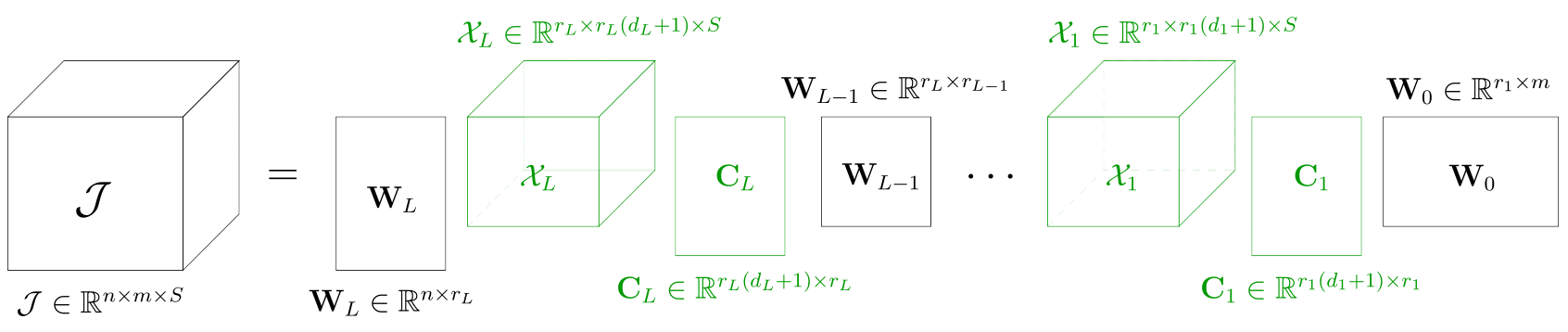}
    \caption{Decomposition of $\mathcal{J}$ according to equation \eqref{eq:struct_J_constr_incorp}.}
    \label{fig:CONSTR-PT-k}
\end{figure*}
for $s=1,2,\hdots,S$. Figure \ref{fig:CONSTR-PT-k} shows the decomposition of $\mathcal{J}$ with the incorporated structure from equation \eqref{eq:struct_J_constr_incorp}. Next, the structure from equation \eqref{eq:struct_J_constr_incorp} can be used to derive ALS updates for each of the $\mathbf{C}_\ell$ matrices, for $\ell=1,2,\hdots,L$. Namely,
\begin{align}
        \mathbf{F}^{(s)}_{Le, \ell} &= \mathbf{W}_L \cdot (\mathbf{X}^{(s)}_{L} \cdot \mathbf{C}_L) \cdot \mathbf{W}_{L-1} \cdots \mathbf{X}^{(s)}_{\ell}  \nonumber \\
        \mathbf{F}^{(s)}_{Ri, \ell} &= \mathbf{W}_{\ell-1} \cdot (\mathbf{X}^{(s)}_{\ell-1} \cdot \mathbf{C}_{\ell-1}) \cdot \mathbf{W}_{\ell-2} \cdots \mathbf{W}_{0} \nonumber \\
        \mathcal{J}_{:,:,s} &= \mathbf{F}^{(s)}_{Le, \ell} \cdot \mathbf{C}_{\ell} \cdot \mathbf{F}^{(s)}_{Ri, \ell}, \label{eq:isolate_C}
\end{align}
such that using equation \eqref{eq:isolate_C} it holds that
\begin{gather}
    \text{vec}(\mathcal{J}_{:,:,s}) = (\mathbf{F}^{(s)T}_{Ri, \ell} \otimes \mathbf{F}^{(s)}_{Le, \ell}) \cdot \text{vec}(\mathbf{C}_{\ell}) \nonumber \\
    \vspace{10px}
    \Downarrow 
    \vspace{10px}
    \nonumber \\
    \text{vec}(\mathcal{J}) = \begin{bmatrix}
        \mathbf{F}^{(1)T}_{Ri, \ell} \otimes \mathbf{F}^{(1)}_{Le, \ell} \\
        \mathbf{F}^{(2)T}_{Ri, \ell} \otimes \mathbf{F}^{(2)}_{Le, \ell} \\
        \vdots \\
        \mathbf{F}^{(S)T}_{Ri, \ell} \otimes \mathbf{F}^{(S)}_{Le, \ell}
    \end{bmatrix} \cdot \text{vec}(\mathbf{C}_{\ell}) = \mathbf{M}^{(\ell)}_{\mathbf{C}} \cdot \text{vec}(\mathbf{C}_{\ell}).  \nonumber
\end{gather}
Next, removing the zero rows from $\text{vec}(\mathbf{C}_{\ell})$ as well as the corresponding columns from $\mathbf{M}^{(i)}_{\mathbf{C}}$ (denoted as $(\mathbf{M}^{(\ell)}_{\mathbf{C}})_0)$ yields the update formula
\begin{align}
            \text{vec}(\mathcal{J}) &= (\mathbf{M}^{(\ell)}_{\mathbf{C}})_0 \cdot \text{vec}(\mathbf{C}_{\ell})_0  \nonumber\\
            &= (\mathbf{M}^{(\ell)}_{\mathbf{C}})_0 \cdot \mathbf{c}_{\ell}.\label{eq:update_C_i} \nonumber
\end{align}

\section{Systems for synthetic examples} \label{app:synthetic}

The systems $\mathbf{f}_1(\mathbf{x})$, $\mathbf{f}_2(\mathbf{x})$, $\mathbf{f}_3(\mathbf{x})$, for the synthetic examples are defined as follows,

\begin{gather}
    \mathbf{f}_{1}(\mathbf{x}): \mathbb{R}^{2} \rightarrow \mathbb{R}^2: \mathbf{x} \mapsto \mathbf{W}_2\mathbf{g}_2(\mathbf{W}_1\mathbf{g}_1(\mathbf{W}_0\mathbf{x})), \nonumber \\
    \mathbf{W}_2 = \begin{bmatrix}
        1.61 & -1.9 \\
        -0.03 & 0.11
    \end{bmatrix},\; \mathbf{W}_1 = \begin{bmatrix}
        0.87 & -0.99 \\
        -1.42 & 0.9
    \end{bmatrix},\; \mathbf{W}_0 = \begin{bmatrix}
         1.72 & -0.73 \\
        -1.26 & -1.18
    \end{bmatrix}  \nonumber \\
    \mathbf{g}_1(\mathbf{u}) = \begin{bmatrix}
        1.91 u^5_1 &+& 1.37u^4_1 &+& 2.37u^3_1 &-& 2.69u^2_1 &+& 0.58u_1 \\
        -1.45 u^5_2 &-& 1.69u^4_2 &-& 2.42u^3_2 &+& 1.86u^2_2 &
    \end{bmatrix}, \nonumber \\
    \mathbf{g}_2(\mathbf{v}) = \begin{bmatrix}
        1.26v^2_1 &-& 0.24 v_1 &-& 0.19 \\
        -1.99v^2_2 &+& 0.19 v_2 &-& 1.93
    \end{bmatrix}, \nonumber
\end{gather}
\begin{gather}
    \mathbf{f}_{2}(\mathbf{x}): \mathbb{R}^{3} \rightarrow \mathbb{R}^{3}: \mathbf{x} \mapsto \mathbf{W}_2\mathbf{g}_2(\mathbf{W}_1\mathbf{g}_1(\mathbf{W}_0\mathbf{x})), \nonumber \\
    \mathbf{W}_2 = \begin{bmatrix}
        -0.59 & 0.86 \\
        0.02 & -1.1 \\
        -1.02 & 1.17
    \end{bmatrix},\; \mathbf{W}_1 = \begin{bmatrix}
        0.21 & -0.94 \\
        -1.12 & 0.56
    \end{bmatrix},\; \nonumber \\
    \mathbf{W}_0 = \begin{bmatrix}
         1.08 & 1.71 & 0.44 \\
        -1.4 & -0.04 & -0.49
    \end{bmatrix}  \nonumber \\
    \mathbf{g}_1(\mathbf{u}) = \begin{bmatrix}
        2.67u^3_1 &+& 2.49u^2_1 &-& 0.03u_1 \\
        1.33u^3_2 &-& 1.49u^2_2 &+& 0.2u_2
    \end{bmatrix}, \nonumber \\
    \mathbf{g}_2(\mathbf{v}) = \begin{bmatrix}
        -0.88v^3_1 &-& 1.64v^2_1 &-& 0.01 v_1 &-& 0.8 \\
        1.69v^3_2 &-& 1.61v^2_2 &-& 1.12 v_2 &+& 0.91
    \end{bmatrix}, \nonumber
\end{gather}
\begin{gather}
    \mathbf{f}_{3}(\mathbf{x}): \mathbb{R}^{4} \rightarrow \mathbb{R}^{3}: \mathbf{x} \mapsto \mathbf{W}_2\mathbf{g}_2(\mathbf{W}_1\mathbf{g}_1(\mathbf{W}_0\mathbf{x})), \nonumber \\
    \mathbf{W}_2 = \begin{bmatrix}
        1.05 & -1.11 \\
        -0.95 & -0.17 \\
        -1. & 0.27
    \end{bmatrix},\; \mathbf{W}_1 = \begin{bmatrix}
        1.49 & -1.05 & -1 \\
        0.78 & -1.29 & 0.62
    \end{bmatrix},\nonumber \\ \mathbf{W}_0 = \begin{bmatrix}
        -1.43 &  0.06 &  0.76 & 1.43 \\
        0.59 & 0.33 & 0.84 & -0.99 \\
        1.6 & -0.23 & -1.92 & 1.84
    \end{bmatrix}  \nonumber \\
    \mathbf{g}_1(\mathbf{u}) = \begin{bmatrix}
        -0.41u^3_1 &-& 0.73u^2_1 &+& 2.08u_1 \\
        -2.76u^3_2 &-& 0.77u^2_2 &+& 2u_2 \\
        1.35u^3_3 &-& 0.29u^2_3 &+& 0.33u_3
    \end{bmatrix}, \nonumber \\
    \mathbf{g}_2(\mathbf{v}) = \begin{bmatrix}
        0.97v^4_1 & + & 0.38v^3_1 & - & 0.18v^2_1 & + & 2.04v_1 & - & 0.73  \\
        -0.71v^4_2 &+& 1.4v^3_2 &-& 1.67v^2_2 &+& 0.74v_2 & - & 0.23
    \end{bmatrix}, \nonumber
\end{gather}

The elements in the weight matrices $\mathbf{W}_2$ and $\mathbf{W}_0$ have been generated with elements sampled from the uniform distribution $\mathcal{U}(-2,2)$ and the elements from $\mathbf{W}_1$ were sampled from $\mathcal{U}(-1.5,1.5)$, under the restriction that the maximal collinearity factor $C$ of these factor matrices,
\begin{equation}
    C = \text{max}\left[\left\{\dfrac{\mathbf{W}^{:,i} \mathbf{W}^{:,j}}{\lVert \mathbf{W}^{:,i} \rVert \lVert \mathbf{W}^{:,j} \rVert} \;\; \Big| \;\; i\neq j\right\}\right], \nonumber
\end{equation}
is lower than $0.5$. The polynomial coefficients are sampled from $\mathcal{U}(-3,3)$.

\end{document}